\documentclass[3p]{elsarticle}

\usepackage{hyperref}


\journal{arXiv}









\bibliographystyle{elsarticle-num}

\usepackage{amsfonts}
\usepackage{amsmath, amscd, amssymb, bm, amsbsy, epsf, amsthm}
    \usepackage{enumerate}
    \usepackage{paralist}
\usepackage{bbm, dsfont}
\usepackage{graphicx,color}
\usepackage{subfigure}
\usepackage{mathrsfs}
\usepackage{verbatim}
\usepackage{inputenc}
\usepackage{diagbox}
\usepackage{algorithm2e}
\usepackage{bbm, dsfont}
\usepackage{multicol}
\usepackage{balance}
\usepackage{verbatim, booktabs}

\usepackage{multirow, url}
\usepackage[table]{xcolor}

\usepackage{cmbright}
\usepackage{pifont}
\usepackage{inputenc}


\DeclareMathAlphabet{\mathpzc}{OT1}{pzc}{m}{it}
\DeclareMathAlphabet{\mathpzc}{OT1}{pzc}{m}{it}
\DeclareMathOperator*{\Var}{\mathbb{V}\!ar}
\DeclareMathOperator*{\unif}{Unif}
\DeclareMathOperator{\trace}{trace}

\newtheorem{theorem}{Theorem}
\newtheorem{lemma}[theorem]{Lemma}

\newtheorem{proposition}[theorem]{Proposition}
\newdefinition{definition}{Definition}
\newdefinition{hypothesis}{Hypothesis}
\newdefinition{problem}{Problem}
\newdefinition{remark}{Remark}
\newdefinition{example}{Example}
\newdefinition{analysis}{Analysis}

\newcommand{\eversor}{\mathbf{\widehat{e}}}
\newcommand{\msum}{\textup{sum}}


\newcommand\itS{\mathcal{S}}
\newcommand\itG{\mathcal{G}}
\newcommand\p{\mathbf{p}}
\newcommand\g{\mathbf{g}}

\newcommand\X{ \mathrm{X} } 
\newcommand\Y{ \mathrm{Y} } 
\newcommand\Z{ \mathrm{Z} } 
\newcommand\upns{ \mathrm{NS} } 
\newcommand\upne{ \mathrm{NE} } 
\newcommand\upnt{ \mathrm{NT} } 
\newcommand\boldsf{ \mathbf{sf} } 
\newcommand\itns{ \text{ns} } 

\newcommand\Exp{\mathbbm{E}}
\newcommand\prob{\boldsymbol{\mathbbm{P}}}


\newcommand\R{\boldsymbol{\mathbbm{R} } }
\newcommand\N{\boldsymbol{\mathbbm{N} } }
\newcommand\setZ{\boldsymbol{\mathbbm{Z}  } }
\newcommand\defining{\overset{\textbf{def}}=}


\newcommand\gpa{\mathop{}\!\mathit{GPA} } 
\newcommand\apv{\mathop{}\!\mathit{APV} }
\newcommand\ia{\mathop{}\!\mathit{IA} }
\newcommand\sa{\mathop{}\!\mathit{SA} }
\newcommand\mt{\mathop{}\!\mathit{mt} }


\begin{document}

\begin{frontmatter}

\title{A big data based method for pass rates optimization \\
in mathematics university lower division courses}
\tnotetext[mytitlenote]{This material is based upon work supported by project HERMES 45713 from Universidad Nacional de Colombia,
Sede Medell\'in.}

\author[mymainaddress]{Fernando A Morales } 
\cortext[mycorrespondingauthor]{Corresponding Author}
\ead{famoralesj@unal.edu.co}

\author[mysecondaryaddress]{Cristian C Chica}
\author[mymainaddress]{Carlos A Osorio}
\author[mymainaddress]{Daniel Cabarcas J}



\address[mymainaddress]{Escuela de Matem\'aticas
Universidad Nacional de Colombia, Sede Medell\'in \\
Carrera 65 \# 59A--110, Bloque 43, of 106,
Medell\'in - Colombia\corref{mycorrespondingauthor}}

\address[mysecondaryaddress]{Departamento de Econom\'ia, Universidad EAFIT. \\
Carrera 49 \# 7 Sur-50, Bloque 38, of 501, Medell\'in - Colombia}


\begin{abstract}
In this paper a big data based method is presented, for the enhancement of pass rates in mathematical university lower division courses, with several sections. We propose the student-lecturer partnership as the cornerstone of our optimization process. First, the students-lecturer success probabilities are computed using statistical segmentation and profiling techniques in the available historical data. Next, using integer programming models, the method finds the optimal pairing of students and lecturers, in order to maximize the success chances of the students' body. Finally, the analysis of our method as an economic process, as well as its importance for public universities of the third World, will be presented throughout the paper.
\end{abstract}

\begin{keyword}
big data, optimization, probabilistic modeling.
\MSC[2010] 97B10 \sep 68U01 \sep 68R05 \sep 65C05
\end{keyword}

\end{frontmatter}



%
%

%
%
%
%
%
\section{Introduction}

It is well-known that improving pass rates in mathematics courses is of paramount importance for academic institutions all over the world. This objective becomes even more critical for public universities as they subside partially or totally the education of its enrolled undergraduates; subject to the country's legislation and the student's economic stratification.

The general consensus is that creating better conditions for the students will improve students' success. Hence, most of the work done in order to address this challenge has two principal directions: 
\begin{enumerate}[(i)]
\item The traditional pedagogical approach which, essentially aims to improve the presentation of the course contents on two fronts: presentation of mathematical discourse i.e., curricula reform, development of course materials and improvement of the lecturer's instructional practice. Part of the latter are the teaching evaluations' open questions, giving feedback to the instructor about how people felt during his/her classes. 

\item The uses of technology in the learning of mathematics. One stream goes in the recollection of data and the measurement of the resource impact in the cognitive process: development of LMS platforms and real-time feedback interfaces \cite{Theroux}. The other stream explores the use of the aforementioned harvested information to improve the learning process \cite{CastroEtAl}: targeted problem sets and training tests \cite{BeckMostow}, identification of favorable pedagogical approaches and learning patterns/styles \cite{LevyWilenski}, identification of fortitudes and weaknesses, assessment of study exercise vs skills building \cite{LevyWilenski}, problem solving approaches, platforms for interaction between students through the learning process \cite{HerrenkohlTasker}. There is also the use of big data to asses learning rather than improve instructional techniques, such as early detection of students at risk \cite{MaacfaydenDawson}. 
\end{enumerate}
The present work fits in the second category, in this case, the use of big data to define policies enhancing the higher education production \cite{Lazear} and without raising the costs. More specifically, the method will suggest an optimal design of student body/composition to maximize the pass rate chances. The design is driven by favorable teacher-student partnership, rather than peer diversity or a peer interaction criterion (see  \cite{DeGiorgiPellizaariWoolstonGui, DeGiorgiPellizaariRedaelli}). A second aspect of the method is that, it is based on the computation of expectations (conditioned to the students' segmentation) and not on statistical regression (linear or not), as it pursues the construction of a probabilistic model, rather than the construction of a production function (see  \cite{DeGiorgiPellizaariWoolstonGui, Kruegger}). In addition, the input database (which is the method's input), is updated from one academic term to the next, therefore, it seems more adequate to recompute the conditional expectations term-wisely instead of pursuing rigid regression models. Moreover, given the current computational tools and possibilities, once the method is coded as an algorithm, the proposed updating approach will come at zero cost increase. This is the main reason while our method will be presented and explained, mostly in the format of an algorithm.

Our approach starts from data bases of a public university for 15 semesters, between February 2010 and July 2017, containing the information of academic performance and demographics of its population. A first stage of descriptive statistics allows to identify the success factors by correlation. A second stage revisits the historical performance (15 semesters in total), it defines segmentation profiles and then computes the historical efficiency for each of the involved instructors, conditioned to the quantiles of segmentation. Next, it uses integer programming to find the optimal matching of students-instructors in two different ways. A third stage of the method randomizes the involved factors, namely the profile of the group taking the class, the number of Tenured Track instructors, the number of sections (group) in the course and such. This is aimed to produce Monte Carlo simulations and find the expected values of improvement in the long run. The factors are regarded as random variables with probabilistic distributions computed from the empirical knowledge, recorded in the database and each Monte Carlo simulation arises from one random realization of the algorithm.  
%
%
\subsection{Economic Justification}\label{economic_justification}
%
Whether or not public higher education constitutes a public good, is a subject that has been extensively debated in economics (see, e.g, \cite{grace1989education} and \cite{tilak2008higher}). On one side, it benefits the whole society by playing a redistributive role, where low income class students can access higher education to improve their future labor perspectives. On the other side, public higher education can be of limited access.\footnote{In Colombia, by 2016 only 49,42\% of the students had access to higher education (official information at \url{https://www.mineducacion.gov.co/portal/}.)  }. This last feature shows why the classification of public higher education as a public good, is a matter of debate among economists, because a public good must be accessible to all individuals. Such a debate is not the matter of this paper, but it highlights two important properties of public higher education (its redistribution role, and its limited access), which are relevant to understand the contribution of this work. 

The Colombian government has implemented different strategies to address the limited access aspect of education, implementing programs like ``Familias en Acci\'on'', a welfare program designed to improve school attendance rates among the young; and ``Ser Pilo Paga'', a merit-based financial aid program designed to increase the higher education attendance rates among the poorest. Although these strategies have increased the attendance rates in education (as showed by \cite{londono2017intended} in the case of higher education), a report from the World Bank (\cite{marta2017crossroads}) showed that 37 percent of students starting a bachelors' degree program withdraw from the higher education system (with percentages going as high as 53 percent, when including short-cycle programs). These numbers show that the limited access aspect of public higher education, can not be addressed only by means of increasing the raw coverage. Higher education institutions need strategies to decrease drop out rates, and require mechanisms to help students to improve their pass rates, grades and others. 

This paper provides a mechanism--an algorithm- to potentially help higher education institutions to improve such indicators of student welfare (pass rates and grades). 
Our approach is to understand the University, not only as an agent that provides education, but also as a \textbf{rational regulator agent}, capable to optimally allocate some of its resources for enhancement of social welfare of its students body.

Throughout the paper we will make remarks emphasizing the interpretation and/or importance of the problem from the economic point of view; these remarks will have a pertinent warning for the reader.
%
%
%
\subsection{Organization and notation}
%
%
The paper is organized as follows. In \textsc{Section} \ref{Sec The study case and its databases}, we a brief description of the study case setting and its databases. In \textsc{Section} \ref{Sec Quantification: Variables, Segmentation and Profiling} the available historical information is quantified by a process of statistical segmentation and profiling. In \textsc{Section} \ref{Sec Optimization and Historical Assessment} the two optimization mechanisms are presented, formulated a problems of integer programming and an assessment of the historical behavior is performed, i.e., compute the theoretical outcome should the method would have been applied in the past. In \textsc{Section} \ref{Sec Randomization and Asymptotic Assesment} we randomize the study case, using the historical records in order to generate random and plausible instances of the study case and apply the optimization method in order to perform Monte Carlo simulations, as well as observing its asymptotic behavior. Finally, \textsc{Section} \ref{Sec Conclusions} delivers the conclusions.

We close the introduction describing the mathematical notation. For any natural number $ N\in \N $, the symbol $ [N] \defining \{ 1, 2, \ldots , N \} $ indicates the set/window of the first $ N $ natural numbers. For any set $ E $ we denote by $ \vert E \vert $ its cardinal and $ \wp(E) $ its power set. We understand $ \Omega $ as a generic finite probability space $ \big( \Omega, \wp(\Omega), \prob \big) $  in which all outcomes are equally likely, i.e. the event probability function satisfies $ \prob(\{\omega\}) = \vert \Omega \vert^{-1} $ for all $ \omega \in \Omega $. In particular for any event $ E \subseteq \Omega $ it holds that 
\begin{equation}\label{Eq Probabiity Measure}
\prob( E ) \defining \dfrac{\vert E \vert}{\vert \Omega\vert } = \dfrac{\text{number of favourable outcomes}}{\text{total number of possible outcomes} } .
\end{equation}
%
%
A particularly important probability space is $ \itS_{N} $, where $ \itS_{N} $ denotes the set of all permutations in $ [N] $, its elements will be usually denoted by $ \pi, \sigma, \tau $, etc. Random variables will be represented with upright capital letters, namely $ \X, \Y, \Z, ... $, expectation and variance of such variables with $ \Exp(\X) $ and $ \Var(\X) $ respectively.
Vectors (deterministic or random) are indicated with bold letters, namely $ \p, \g,\mathbf{X} ... $ etc. Deterministic matrices are represented with capital letters  i.e., $ A, G, T $.
%
%
%
%
\section{The study case and its databases}\label{Sec The study case and its databases}
%
%
%
%
In this work our study case is the performance of lower division mathematics courses at Universidad Nacional de Colombia, Sede Medell\'in (National University of Colombia at Medell\'in). The Institution is a branch of the National University of Colombia, the best ranked higher education Institution in Colombia; it is divided in five colleges: Architecture, Science, Humanities \& Economical Sciences, Agriculture and Engineering (Facultad de Minas). The colleges are divided in schools and/or departments. The University offers 27 undergraduate programs and 85 graduate programs divided in Specializations, MSc and PhD levels, depending on the school/department. Each semester, the University has an average enrollment of 10000 undergraduates and 2000 graduates with graduation rates of 1240 and 900 respectively. The College of Engineering is the most numerous, consequently, the mathematics lower division courses are highly demanded and have a profound impact in the campus' life; its teaching and evaluation is in charge of the School of Mathematics.

The School of Mathematics is part of the College of Science, it teaches two types of courses: specialization (advanced undergraduate and graduate courses in mathematics) and service (lower division). The latter are: \textit{Basic Mathematics} (BM, college algebra), \textit{Differential Calculus} (DC), \textit{Integral Calculus} (IC), \textit{Vector Calculus} (VC), \textit{Differential Equations} (ODE), \textit{Vector \& Analytic Geometry} (VAG), \textit{Linear Algebra} (LA), \textit{Numerical Methods} (NM), \textit{Discrete Mathematics} and \textit{Applied Mathematics}. The total demand of these courses amounts to an average of 7200 enrollment registrations per semester. The last two courses, \textit{Discrete Mathematics} and \textit{Applied Mathematics} have very low an unstable enrollment, therefore, their data are not suitable for statistical analysis and they will be omitted in the following. On the other hand, the remaining courses are ideally suited for big data analysis, due to its massive nature; see Table \ref{Tb Historical Enrollment Table} below.
\begin{table}[h!]
\def\arraystretch{1.4}
\small{
\begin{center}
\rowcolors{2}{gray!25}{white}
\begin{tabular}{ c c c c c c c c c c c}
    \hline
    \rowcolor{gray!80}
Year 
& Semester
& DC
& IC
& VC
& VAG
& LA
& ODE
& BM
& NM 
& Total\\ 
    \toprule
2010 &	1 &	1631 &	782	  & 381	& 1089	 & 983	& 668 &	848	& 142 & 6524 \\
2013 &	2 &	1446 &	1212 &	549	& 1187	& 1103 & 786 &	846	& 326 & 7455 \\
2016 &	2 &	1569 &	1296 &	594	& 1355	& 1009 & 1019 &	1111 &	284	& 8237 \\
\rowcolor{gray!80}
\hline
Mean & Does not apply & 1445.9	& 1122.9 & 549.8 & 1146.5 & 988.7 & 801.8 & 905.4	& 267.5 & 7228.5 \\
    \hline
\end{tabular}
\end{center}
}
\caption{Historical Enrollment Sample. The table shows the mean enrollment for each course for the period from 2010-1 to 2017-1, together with a sample of three semesters in that time window.}\label{Tb Historical Enrollment Table}
\end{table}
On a typical semester these courses are divided in sections (between 8 to 22, depending on the enrollment) of sizes ranging from 80 to 140 (because of classroom physical capacities). There is no graded homework but students have problem sheets as well as optional recitation classes. As for the \textbf{grading scale} 5.0 is the maximum, 3.0 is the \textbf{minimum pass grade} and grades contain only \textbf{one decimal}. The evaluation consists in three exams which the students take simultaneously; the activity is executed with the aid of the software packages SiDEx-$ \Omega $ and RaPID-$ \Omega$ which manage the logistics of the evaluation and proctoring activities, including the organization of the grading stage. More specifically, for fairness and consistency of the grading process a particular problem is graded by one single grader for all the students, i.e., it is a \textbf{centralized} grading process. As a consequence of the institutional policies described above, the data are statistically comparable. Moreover, the paper-based tests administrator SiDEx-$ \Omega $ introduces high levels of fraud control, because of its \textit{students seating assignment algorithm}; this increases even more the \textbf{reliability} of the data.
%
%
%
%
\subsection{The Databases}
The University allowed limited access to its data bases for the production of this work. The information was delivered in five separate tables which were merged in one single database using Pandas: the file \textit{Assembled\_Data.csv} which contains 108940 rows, each of them with the following fields
\begin{enumerate}[(i)]
\item Student's Personal Information: 
$ \bullet $ Year of Birth
$ \bullet $ e-mail
$ \bullet $ ID Number
$ \bullet $ Last Name and Names
$ \bullet $ Gender

\item Student's General Academic Information:
$ \bullet $ University Entrance Year (AA, Academic Age)
$ \bullet $ Career
$ \bullet $ Academic Average (GPA)

\item Student's Academic Information Relative to the Course:
$ \bullet $ Course
$ \bullet $ Course Code
$ \bullet $ Academic Year 
$ \bullet $ Academic Semester
$ \bullet $ Grade
$ \bullet $ Completed vs Canceled 
$ \bullet $ Number of Attempts
$ \bullet $ Number of Cancellations

\item Student's Administrative Information Relative to the Course:
$ \bullet $ Section Number
$ \bullet $ Schedule
$ \bullet $ Section Capacity
$ \bullet $ Number of Enrolled Students
$ \bullet $ Instructor's ID Number
$ \bullet $ Instructor's Name
$ \bullet $ Tenured vs. Adjoint Instructor

\end{enumerate}
\begin{remark}[Meaning of a row]\label{Rem Row Meaning}
It is understood that one registration corresponds to one row, for instance if a particular student registers for DC and LA in the same term, one row is created for each registration, repeating all the information listed in items (i) and (ii) above. The same holds when an individual needs to repeat a course because of previous failure or cancellation.  
\end{remark}
%
%
%
\section{Quantification: Variables, Segmentation and Profiling}\label{Sec Quantification: Variables, Segmentation and Profiling}
%
%
%
%
In the present work we postulate the Lecturer as one of the most important factors of success, more specifically the aim is to attain the optimal Instructor-Student partnership; in this we differ from \cite{DeGiorgiPellizaariWoolstonGui, DeGiorgiPellizaariRedaelli} where its proposed that the class composition should be driven by the peers interaction. To that end, it becomes necessary to profile the students' population according to its relevant features. 
%
%
%
%
\subsection{Determination of the Segmentation Factor}
Computing the correlation matrix of the quantitative factors considered in the \textit{Assembled\_Data.csv}; in the table \ref{Tb Correlations DC} we display the correlation matrix for the Differential Calculus course. From the \textit{Grade} row it is clear that the most significant factor in the \textit{Grade} variable is the \textit{Academic Average} (GPA) followed by the \textit{Academic Age} (AA) and the \textit{Age}. However, the impact of the GPA is about four times the impact of AA and the same holds for the \textit{Age} factor. Moreover, from the GPA row, it is clear that the most significant factor after the \textit{Grade} are precisely the \textit{Academic Age} and the \textit{Age} (the younger the student, the higher the GPA). In addition, for the remaining courses, similar correlation matrices are observed. Hence, we keep the GPA as the only significant quantitative factor in the \textit{Grade} variable. 
\begin{remark}\label{Rem sections size}
It is important to mention that the impact of the class size in the students' performance has been subject of extensive discussion without consensus. While 
\cite{AngrisLavy, Kruegger} report a significant advantage in reducing class sizes, \cite{Hoxby} finds no effect. In our particular case \textsc{Table} \ref{Tb Correlations DC} shows that the \textit{Section Capacity} is uncorrelated, not only to the \textit{Grade} variable, but also to the \textit{GPA} variable. Moreover the \textit{Section Capacity} is uncorrelated with the \textit{Cancellations} (drop out) variable.   
\end{remark}
\begin{table}[h!]
\def\arraystretch{1.4}
\scriptsize{
\begin{center}
\rowcolors{2}{gray!25}{white}
\begin{tabular}{ c | c c c c c c c c }
    \hline
    \rowcolor{gray!80}
FACTOR
& Section 
& Age
& AA
& \# Enrolled 
& Grade
& \# Cancellations
& \# Attempts
& GPA
\\
\rowcolor{gray!80}
& Capacity
&
& 
& Students
& 
& 
& 
& 
\\
\hline 
Section Capacity
& 1.0000 	 & -0.0108	& 0.0570	& 0.8334    & 0.0074	& 0.0003  &	0.0393 & 0.0180 \\
Age
& -0.0108	& 1.0000	& 0.3069	& -0.0168	& -0.2031  & 0.0775	 & 0.1384 &	-0.2082 \\
AA
& 0.0570	 & 0.3069	 & 1.0000	 & 0.0416	 & -0.2164	& 0.1668  &	0.4294 &	-0.1667 \\
\# Enrolled Students
& 0.8334	& -0.0168   & 0.0416	& 1.0000	& 0.0252    & 0.0131  &	-0.0041	& 0.0325 \\
\ding{217} Grade
& 0.0074	& -0.2031	& -0.2164  & 0.0252	   & 1.0000	   & -0.1247 &	-0.0241	& $ \star $ 0.8207 \\
\# Cancellations
& 0.0003	& 0.0775	& 0.1668	& 0.0131   	& -0.1247	& 1.0000   &	0.3101 & -0.0686 \\
\# Attempts
& 0.0393	& 0.1384	& 0.4294	& -0.0041  & -0.0241   &	0.3101 &	1.0000 &	-0.0401 \\
\ding{217} GPA 
& 0.0180	& -0.2082	& -0.1667 &	0.0325	  & $ \star $ 0.8207	   & -0.0686  &	-0.0401	& 1.0000 \\
\hline
\end{tabular}
\end{center}
}
\caption{Quantitative Factors Correlations Table, Course: \textbf{Differential Calculus}. The table displays the correlation matrix for the variables of interest (Section Capacity, Age, AA, \# Enrolled Students, Grade, \# Cancellations, \# Attemps, GPA).
}\label{Tb Correlations DC}
\end{table}

Two binary variables remain to be analyzed namely  \textit{Pass/Fail} (PF) and \textit{Gender}. If we generically denote by $ X $  the binary variables and by $ Y $ a variable of interest, the point-biserial correlation coefficient is given by
\begin{equation}\label{Eq point-biserial correlation}
r_{pb} \defining \frac{M_{1} - M_{0} }{\sigma} \sqrt{\frac{N_{1}\, N_{0}}{N^{2}}}.
\end{equation}
Here, the indexes $ 0,1 $ are the values of the binary variable $ X $. For $ i = 0, 1 $, $ M_{i} $ is the mean value of the variable $ Y $ for the data points in the group/event $ \{ X = i \} $, $ N_{i} $ denotes the population of each group $ \{ X = i \} $, $ N = N_{1} + N_{0} $ stands for the total population and $ \sigma $ indicates the standard deviation of the variable $ Y $. 
%
%

The correlation analysis between the binary \textit{Pass/Fail} (PF) variable vs the quantitative factors is displayed in Table \ref{Tb PF Biserial Correllations}. As in the \textit{Grade} variable analysis, the most significant factor in the \textit{Pass/Fail} variable, is the \textit{Academic Average} (GPA) followed by the \textit{Academic Age} (AA) and the \textit{Age}. In this case however, the impact of the GPA is only three times the impact of AA as well as the \textit{Age} factor. Again, we keep the GPA as the only significant quantitative factor in the \textit{Pass/Fail} variable. 
\begin{table}[h!]
\def\arraystretch{1.4}
\small{
\begin{center}
\rowcolors{2}{gray!25}{white}
\begin{tabular}{ c | c c c c c c c c }
    \hline
    \rowcolor{gray!80}
\diagbox{FACTOR}{COURSE}
& DC
& IC
& VC
& VAG
& LA
& ODE
& BM
& NM
 \\
\hline
Section Capacity 
& 0.0058 &	0.0139 & -0.0296 & -0.0986 & -0.0147 &	0.0448 & -0.0451 & 0.0840 \\
Age
& -0.1757 &	-0.2554 & -0.3048 &	-0.1879	& -0.2444 &	-0.2555 & -0.0892 & -0.3333 \\
AA
& -0.1783 &	-0.2855 & -0.3012 &	-0.1550 & -0.2230 &	-0.3841	& -0.0209 &	-0.2731 \\
\# Enrolled Students
& 0.0070 &	0.0276 & 0.0172	& -0.0410 & 0.0602 & 0.1258	& -0.0374 &	0.1916 \\
Grade
& 0.8072 & 0.7987 &	0.7864 & 0.8145	& 0.7959 & 0.7999 &	0.7988 & 0.7922 \\
\# Cancellations
& -0.1129 &	-0.1397 & -0.1420 &	-0.1299	& -0.1178 &	-0.1647	& -0.0096 &	-0.1489 \\
\# Attempts
& -0.0988 &	-0.1635	& -0.1683 &	-0.1374	& -0.1399 &	-0.1577 & -0.0054 &	-0.2065 \\
\ding{217} GPA
& 0.6062 & 0.5892 &	0.5884 & 0.6445 & 0.6063 & 0.5341 &	0.6125 & 0.5828
\\
\hline
\end{tabular}
\end{center}
}
\caption{Pass/Fail vs Quantitative Factors, Biserial Correlations Table, 
	Course: \textbf{All}. We display the biserial correlations for each of the variables (Section Capacity, Age, AA, \# Enrolled Students, Grade, \# Cancellations, \# Attemps, GPA) compared to the Pass/Fail variable, per course.}\label{Tb PF Biserial Correllations}
\end{table}

Next, the correlation analysis \textit{Gender} variable vs the Academic Performance Variables, i.e., \textit{Grade}, GPA and \textit{Pass/Fail} (PF) is summarized in the table \ref{Tb Gender Correllations} below. Clearly, the \textit{Gen}der variable has negligible incidence in the Academic Performance Variables, with the exception of the GPA for the Basic Mathematics (BM) course case, where females do slightly better. Since this unique correlation phenomenon is not present in the remaining courses, the \textit{Gender} variable will be neglected from now on.  Finally, it is important to stress that, given the binary nature of the \textit{Gender} and the \textit{Pass/Fail} (PF) variables, all the correlation coefficients agree i.e., point-biserial, Pearson and Spearman and Kendall. 
\begin{table}[h!]
\def\arraystretch{1.4}
\small{
\begin{center}
\rowcolors{2}{gray!25}{white}
\begin{tabular}{ c | c c c c c c c c }
    \hline
    \rowcolor{gray!80}
\diagbox{FACTOR}{COURSE}
& DC
& IC
& VC
& VAG
& LA
& ODE
& BM
& NM
 \\
\hline
Grade
& 0.0545 &	0.0584 & 0.0240	& -0.0485 &	0.0154 & 0.0164	& -0.0740 &	-0.0405 \\
GPA
& -0.0425 &	-0.0441	& -0.0419 &	-0.0766	& -0.0343 & -0.0824	& -0.1159 & -0.0663 \\
Pass/Fail (PF)
& 0.0509 &	0.0382	& 0.0166 &	-0.0373	& 0.0103 &	0.0085 & -0.0535 & -0.0226 \\
\hline
\end{tabular}
\end{center}
}
\caption{Gender vs Academic Performance Variables, 
	Course: \textbf{All}. The biserial correlations of each of the variables in (Grade, GPA, Pass/Fail) compared to the Gender variable, is displaye for each course.}\label{Tb Gender Correllations}
\end{table}
%
%

From the previous discussion, it is clear that out of the analyzed variables, the GPA is the only one with significant incidence on the academic performance variables \textit{Grade} and \textit{Pass/Fail}. Consequently, this will be used as the unique criterion for the segmentation of students' population. 
%
%
From now on, our analysis will be focused on the \textit{Grade Average} and the \textit{Pass/Fail} variables as measures of success, while the \textit{GPA} will be used for segmentation purposes discussed in Section \ref{Sec Segmentation Process}. In \textsc{Table} \ref{Tb Academic Performance Variables Average}, the global averages (from 2010-1 to 2017-1) of these variables for all the service courses are displayed. 
\begin{table}[h!]
\def\arraystretch{1.4}
\small{
\rowcolors{2}{gray!25}{white}
\begin{center}
\begin{tabular}{ c | c c c c c c c c }
    \hline
    \rowcolor{gray!80}
\diagbox{VARIABLE}{COURSE}
& DC
& IC
& VC
& VAG
& LA
& ODE
& BM
& NM
 \\
\hline
Grade &
2.6849	& 2.7829 &	3.2198	& 2.8616 &	3.0233 & 3.1170 & 2.8308 &	3.1893\\
GPA &
3.2213 & 3.3527	& 3.4969 & 3.2386 & 3.3201	& 3.4548 &	3.2696 & 3.5330 \\
Pass/Fail &
0.5010 & 0.5339 & 0.7151 & 0.5901 &	0.6398	& 0.6846 &	0.5441	& 0.6924 \\
Number of Tries &
1.7382 & 1.9140	& 1.4996 & 1.5205 &	1.5495	& 1.7710 &	1.0549	& 1.4115 \\
\hline
\end{tabular}
\end{center}
}
\caption{Academic Performance Variables Average Values, 
	Course: \textbf{All}. The average performance (as captured by the variables Grade, GPA, and Pass/Fail) of students from 2010-1 to 2017-1, is displayed for each course.}\label{Tb Academic Performance Variables Average}
\end{table}
%
%
%
%
\subsection{Segmentation Process}\label{Sec Segmentation Process}
The profiling of students' population is to be made course-wise. For each group taking a course, the algorithm computes a partition of the interval $ [0,5] $ of ten numerical GPA intervals $\big(I_{\ell}: \ell \in [10] \big)$, so that approximately ten percent of the population is contained in $ I_\ell $ for all $ \ell \in [10] $. Equivalently, if a histogram of relative frequencies is drawn, as in \textsc{Figure} \ref{Fig DC GPA Histograms}, the area between the curve and any interval should be around 0.1. Hence, if $ f_{\gpa}  $ is the relative frequency of the $ \gpa $ variable then $ \int_{I_{\ell} } f_{\gpa} \, dx \sim 0.1 $ for all $ \ell \in [10] $. The process described above is summarized in the pseudocode \ref{Alg Segmentation of Students} below.

\begin{figure}[h!]
        \centering
        \begin{subfigure}[Example DC. GPA Histogram, Semesters from 2010-1 to 2017-1. ]
{\includegraphics[scale = 0.38]{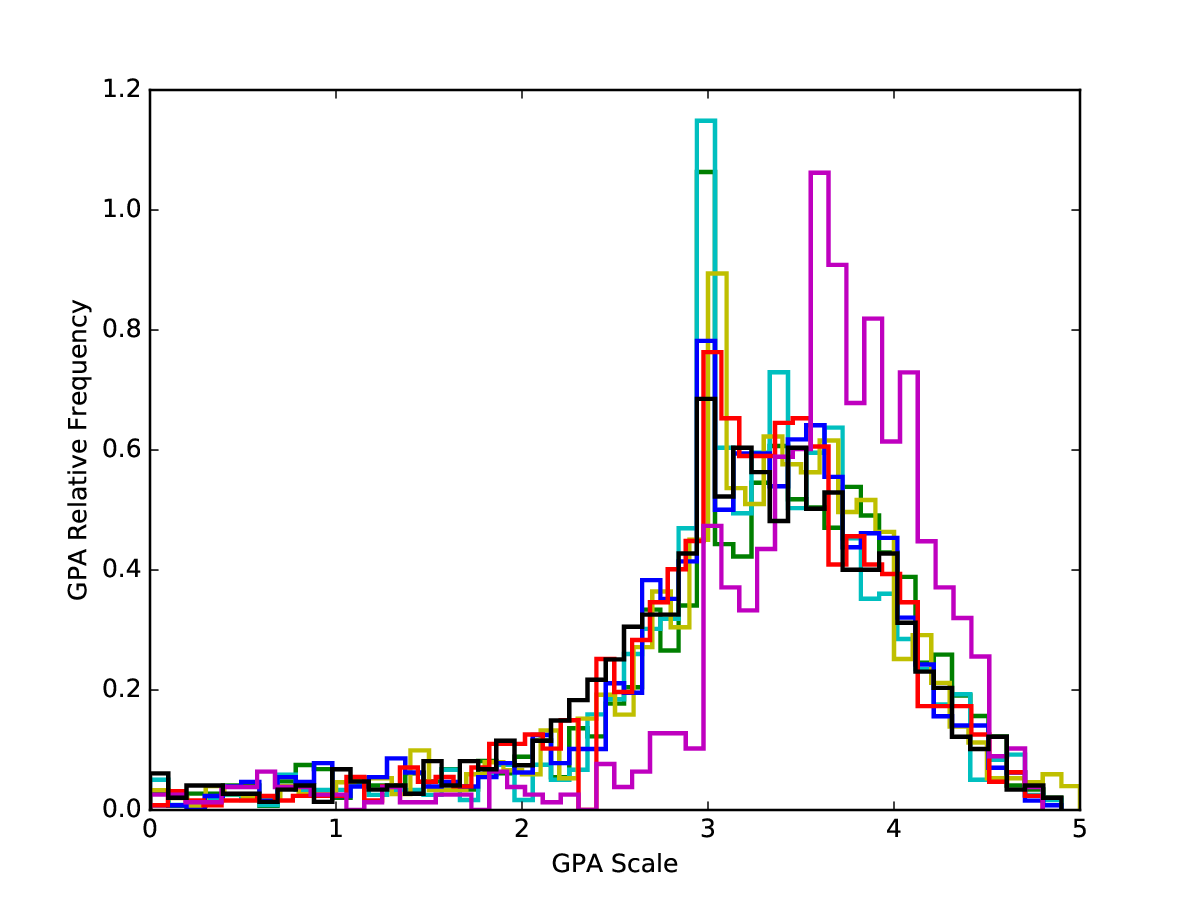} } 
        \end{subfigure}
        ~ 
          \begin{subfigure}[Example DC. GPA Histogram, Semesters 2010-1 and 2015-1.]
{\includegraphics[scale = 0.38]{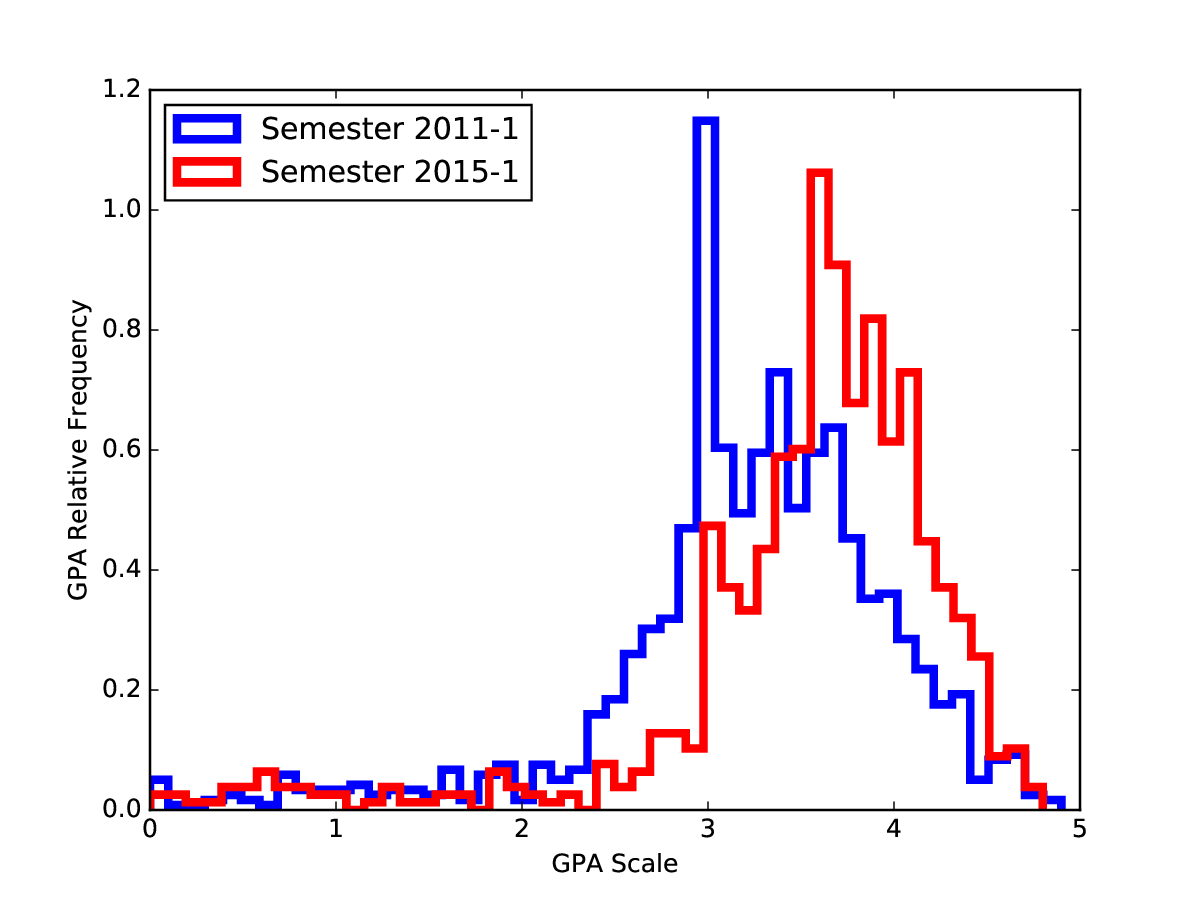} } 
        \end{subfigure}
\caption{Differential Calculus GPA Normalized Histogram. Figure (a) Shows the normalized frequencies histograms for all the semesters available in the database \textit{Assembled\_Data.csv}. Figure (b) displays the normalized histogram of only two semesters for optical purposes. Observe that in both cases the area beneath the normalized histogram is exactly equal to one. 
\label{Fig DC GPA Histograms} }
\end{figure}
\begin{algorithm}[H]
\KwData{
Database: Assembled\_Data.csv. \\
Year ($ y $) and Semester ($ s $). \\
\textit{Analyzed Course}: DC, IC, ..., NM. 
}
\KwResult{Extremes of the GPA Segmentation Intervals $ \big(I_{\ell}: \ell \in [L]\big) $; extremes = [0, n\_1, n\_2,\ldots, 5 ]}
\textbf{Initialization}\;
List\_GPA $ \leftarrow $ \textbf{sorted} (\textbf{hash} $ \gpa $ variable from \\
Assembled\_Data.csv[(Course = \textit{Analyzed Course}) \& (Year = $ y $) \& (Semester = $ s $ )] ) \;
extremes\_0 $ \leftarrow $ 0\; 
\For{ $ i \in [10] $ }{
extremes\_i $ \leftarrow \big\lfloor i\times  \frac{\text{length of List\_GPA} }{10} \big\rfloor$\;
}
\If{list {extremes} contains repetitions}{
extremes $ \leftarrow $ remove repetitions from extremes}
\caption{Segmentation of Students}\label{Alg Segmentation of Students}
\end{algorithm}
\begin{remark}\label{Rem Segmentation Algorithm}
Observe that Algorithm \ref{Alg Segmentation of Students} is aimed to produce ten segmentation intervals, however the last instruction considers removing some points out of the eleven extremes previously defined, in case of repetition. Such situation could arise when a particular GPA value is too frequent as it can be seen in \textsc{Figure} \ref{Fig DC GPA Histograms} (b), for the case of Semester 2011-1, which has a peak at GPA = 3. Similar peaks can be observed in other semesters as \textsc{Figure} \ref{Fig DC GPA Histograms} (a) shows.
\end{remark}
%
%
%
%
\subsection{Computation of the Lecturer's Performance}\label{Sec Computation Lecturer Performance}
%
%
%
%
The treatment of the lecturer as a success factor is completely tailored to the case of study and it can not be considered as a general method, the expected (average) performance will be computed for the \textit{Grade} and the \textit{Pass/Fail} variables. For the computation of instructors' performance, first a segmentation process $ \big(I_{\ell}: \ell \in [L]\big) $ (as described in Subsection \ref{Sec Segmentation Process}) has to be done. Next, the computation is subject to the following two principles 
\begin{enumerate}[(i)]
\item Adjunct and Tenured (Track or not) lecturers are separated in different groups. 

\item If the experience of a particular instructor (the full personal teaching log inside the database \textit{Assembled\_Data.csv}) within a segment $ I_{\ell} $ of analysis, accumulates less than 30 individuals, his/her performance within such $ I_{\ell} $ is replaced by the average performance of the group he/she belongs to (Adjunct or Tenured) within such group, i.e., the conditional expectation of the Academic Performance Variable ($ \apv $) subject to the segment of analysis: $ \Exp\big( \apv \big\vert \text{Instructor} = x\big) $, see \cite{MendenhallBeaver, BillingsleyProb}.
\end{enumerate}
\begin{remark}\label{Rem Adjunct lecturers treatment}
The separation of Adjunct and Tenured instructors is done because the working conditions, expectations, as well as the results, are significantly different from one group to the other inside the Institution of analysis. In particular, the adjunct instructors are not stable nor full-time personnel. Consequently, these two groups are hardly comparable. On the other hand, there is an internal policy of rotating teaching faculty through the lower division courses, according to the needs of the School of Mathematics. Hence, due to the hiring and teaching-rotation policies, an Adjunct instructor rarely accumulates 30 or more students of experience within a profile segment $ I_{\ell} $.  
\end{remark}
%
%
%
%
\begin{remark}[Economy Perspective]\label{measurament_of_instructor}
Measuring instructor performance through the students \textit{Grade} and \textit{Pass/Fail} variables, treating instructors as a transformation function in which output (student results) is measured with respect to the input (students background), was the norm in the past (see, e.g., \cite{higgins1989performance}). Nonetheless this approach has several problems, as pointed out in \cite{higgins1989performance} and \cite{rockoff2010subjective}. Some of these problems are: the difficulty in accurately measuring students' background, the existing bias charged on instructors tasked with students who are more difficult to teach, and the non-comparability of students' grades across different instructors. Nonetheless, time and again, new developments on how to measure instructors' performance appear. In \cite{berk2005survey}, R. A. Berk presents 12 strategies to measure teaching effectiveness, some of these measures are: peer ratings, self-evaluation, alumni ratings, teaching awards and others. Finally, given the algorithmic nature of our work we only need one variable of instructor performance in order to present the optimization method, but the algorithm itself applies to any quantitative measure, as the ones just mentioned above. 
\end{remark}
%
%
The performance computation is described in the following pseucode

\begin{algorithm}[H]
\KwData{
Database: Assembled\_Data.csv. \\
\textit{Analyzed Course}: DC, IC, ..., NM.\\
Academic Performance Variable ($ \apv $): Grade, Pass/Fail \\
Group Segmentation $ \big( I_{\ell} : \ell \in [L]\big) $. }
\KwResult{Hash tables of performance for the analyzed course, conditioned to each segmentation\\ interval $ \big( I_{\ell}: \ell \in [L]\big) $ and for the chosen Academic Performance Variable ($ \apv $): 
APV\_Performance\_Tenured$ [\ell] $,
APV\_Performance\_Adjoint$ [\ell] $,
APV\_Performance\_Instructor$ [\ell] ,\, \ell \in [L]$}
\textbf{Initialization}\;
Instructors\_List $ \leftarrow $ \textbf{hash} lecturer list from Assembled\_Data.csv[Course = \textit{Analyzed Course}] \; 
Tenured\_List $ \leftarrow $ \textbf{choose} from Instructors\_List the Tenured lecturers\; 
Adjoint\_List $ \leftarrow $ Instructors\_List $ - $ Tenured\_List\; 

\For{ $ \ell \in [L] $ }{
X = \textbf{hash} $ \apv $ field from table:\newline
Assembled\_Data.csv[(Course = \textit{Analyzed Course}) \& (GPA $ \in  I_{\ell} $) \& (\textit{Instructor} $ \in $ Tenured\_List) ]\;
APV\_Performance\_Tenured$ [\ell]\leftarrow \Exp(\X) $ \;
X = \textbf{hash} $ \apv $ field from table: \newline
Assembled\_Data.csv[(Course = \textit{Analyzed Course}) \& (GPA $ \in  I_{\ell} $) \& (\textit{Instructor} $ \in $ Adjoint\_List) ]\;
APV\_Performance\_Adjoint$ [\ell] \leftarrow \Exp(\X) $.
}
\For{ $ \ell \in [L] $ }{
	\For{instructor $ \in $ Instructors\_List}{
	X = \textbf{hash} $ \apv $ field from table:\newline
	Assembled\_Data.csv[(Course = \textit{Analyzed Course}) \& (GPA $ \in  I_{\ell} $) \& (\textit{Instructor} = instructor) ]\;
		\eIf{length of $ \X  >=  30 $ }{
		APV\_Performance\_Instructor$ [\ell]\leftarrow \Exp(\X) $ .}
		{
			\eIf{instructor $ \in $ Tenured\_List}{
			APV\_Performance\_Instructor$ [\ell]\leftarrow $APV\_Performance\_Tenured$ [\ell] $ }
			{APV\_Performance\_Instructor$ [\ell]\leftarrow $APV\_Performance\_Adjoint$ [\ell] $ }
		}
}
}
\caption{Computation of Instructors' Performance}\label{Alg Instructors' performance}
\end{algorithm}
%
%
%
%
%
%
%
%
\section{Core Optimization Algorithm and Historical Assessment}\label{Sec Optimization and Historical Assessment}
%
%
%
%
%
%
In this section we describe the core optimization algorithm. Essentially, it is the integration of the previous algorithms with an integer programming module whose objective function is to maximize the \textit{Expectation} of the academic performance variables (\textit{Grade} and \textit{Pass/Fail}), according to the big data analysis described in \textsc{Section} \ref{Sec Computation Lecturer Performance}. Two methods are implemented for each course and semester recorded in the database.
\begin{enumerate}[I.]
\item \textbf{Instructors Assignment (IA).} Assuming that the groups of students are already decided, assign the instructors pursuing the optimal expected performance partnership: Instructor-Conformed Group. This is known in integer programming as the Job Assignment Problem.

\item \textbf{Students Assignment (SA).} Assuming that the sections (with a given capacity) and their corresponding lecturers are fixed, assign the students to the available sections in order to optimize the expected performance of the Student-Instructor partnership. This is the integer programming version of the Production Problem in linear optimization. 
\end{enumerate}   
In order to properly model the integer programs we first introduce some notation
\begin{definition}\label{Def Characteristic Numbers}
Let $ N, L, J \in \N $ be respectively the total number of students, the total number of segmentation profiles and the total number of sections. Let $ \p = \big(p_{\ell}: \ell \in [L] \big) \in \N^{L} $, $ \g = \big(g_{j}: j \in [J] \big) \in \N^{J} $ be respectively the population of students in each profiling segment and the capacities of each section, in particular the following sum condition holds.
\begin{equation}\label{Eq Sum Condition}
\sum_{\ell \, = \, 1}^{L} p_{\ell} =   \sum_{j \, = \, 1}^{J} g_{j} = N .
\end{equation}
\end{definition}
\begin{remark}\label{Rem Salck Variables Absence}
Observe that the condition $ \sum_{j \, = \, 1}^{J} g_{j} = N  $ implies there are no slack variables for the capacity of the sections. This is due to the study case, in contrast with other Universities where substantial slack capacities can be afforded.
\end{remark}
\begin{definition}\label{Def Performance and Group Assignment Matrices}
Let $ N, L, J \in \N $ , $ \p \in \N^{L}, \g \in \N^{J} $ be as in Definition \ref{Def Characteristic Numbers}. 
\begin{enumerate}[(i)]
\item 
We say a matrix $ G \in \R^{L\times J} $ is a \textbf{group assignment matrix} if all its entries are non-negative integers and 
\begin{align}\label{Eq Row Sum Column Sum Condition}
& \sum\limits_{j \,\in\,[J] } G(\ell,j) = p_{\ell}, \;  \forall \, \ell \in [L] , &
& \sum\limits_{\ell \,\in\,[L] } G(\ell,j) = g_{j}, \; \forall \,  j \in [J] .
\end{align}
Furthermore, define the \textbf{group assignment space} by $ \itG \defining \big\{ G: G \text{ is a group assignment matrix} \big\} $.  

\item Let $ \big(t_{j}: j\in [J]\big) $ be the intructors assigned to the course. For a fixed $ \apv \in \big\{\text{Grade, Pass/Fail} \big\} $ define \textbf{expected performance matrix} $ T_{\apv} \in \R^{J \times L} $ as the matrix whose entries $ T_{\apv}(j, \ell) $ are the $ \apv $ variable performance, corresponding to the instructor $ t_{j} $ within the segmentation interval $ I_{\ell} $.

\item Given a group assignment matrix $ G $ and a faculty team $ \big(t_{j}: j\in {J}\big) $, define the \textbf{choice performance matrix} $ C_{\apv} $ by
\begin{equation}\label{Eq Costs Table APV}
C_{\apv} \defining T_{\apv} G .
\end{equation}

\end{enumerate}
\end{definition}
\begin{remark}\label{Rem Group Assignment Matrix and Performance Matrix}
	Observe the following
	\begin{enumerate}[(i)]
		\item $ C_{\apv}(j,i) $ measures the average performance of the instructor $ t_{ j } $  over the partition $ \{G(k,i)\}_{k=1}^{L} $ of the section $ g_{ i } $. 
		
		\item Recall from combinatorics that a weak composition of $ n $ in $ m $ pars is a sequence of inon-negative ntegers  $ (a_{ 1 }, \ldots, a_{ m } ) $  satisfying $ \sum_{ i \, = \, 1 }^{ m } a_{ i } = n $ (see \cite{BonaWalk}). Notice that $ \big(G(\ell,j): j\in [J]\big) $ is a weak composition of $ p_{\ell} $ for every $ \ell\in [L] $ and that $ \big(G(\ell,j): \ell\in [L]\big) $ is a weak composition of $ g_{j} $ for every $ j\in [J] $.

\item Recall that the expected performance matrix $ T_{\apv} $ for $ \apv \in \big\{\text{Grade, Pass/Fail} \big\} $, is recovered from the file APV\_Performance\_Instructor hash table constructed in Algorithm \ref{Alg Instructors' performance}, Section \ref{Sec Computation Lecturer Performance}. 
	\end{enumerate}
\end{remark}
Next we introduce the integer problems
\begin{problem}[Instructors Assignment Method]\label{Pblm Instructors Assignment}
Let $ N, L, J \in \N $ be as in \textsc{Definition} \ref{Def Characteristic Numbers}, let $ \xi = \big(\xi(i, j): i \in[I], j\in [J] \big) \in \big\{0,1\big\}^{L\times J} $ and let $ C_{\apv} $ be as in \textsc{Definition} \ref{Def Performance and Group Assignment Matrices} for a fixed group assignment matrix $ G $ and faculty team $ \big(t_{j}: j\in {J}\big) $. Then, the instructors assignment problem is given by
\begin{subequations}\label{Eq Instructors Assignment}
\begin{equation}\label{Eq Instructors Assignment Objective Function}
v_{\ia} \defining \max\limits_{ \xi \, \in  \, \{0,1\}^{L\times J} } 
\sum_{i\, = \, 1}^{J} \sum_{j\, = \, 1}^{J} C_{\apv}(i, j) \, \xi(i, j) 
,
\end{equation}
subject to:
\begin{align}\label{Eq Instructors Assignment Constraints}
& \sum_{i\, = \, 1}^{J} \xi(i, j)  = 1, \; \forall\, j \in [J] , & 
& \sum_{j\, = \, 1}^{J}  \xi(i, j) = 1 ,  \; \forall\, i \in [J] .
\end{align}
\end{subequations}
\end{problem}
\begin{problem}[Students Assignment Problem]\label{Pblm Students Assignment Problem}
With the notation introduced in \textsc{Definition} \ref{Def Characteristic Numbers} and a chosen faculty team $ \big(t_{j}: j\in {J}\big) $, let $ \pi $ be a permutation in $ \itS_{J} $ such that $ t_{\pi(j)} $ is the instructor of section $ j $ for all $ j \in [J] $ i.e., a chosen assignment of lecturers to the sections. Then, the students assignment problem is given by
\begin{equation}\label{Eq Students Assignment Problem}
v_{\sa} = \max\limits_{G\, \in\, \itG } 
\sum_{j\, = \, 1}^{J} \big(T_{\apv} G \big)\big(j, \pi(j) \big) 
=  \max\limits_{G\, \in\, \itG } 
\sum_{j\, = \, 1}^{J} C_{\apv}\big(j, \pi(j) \big) 
.
\end{equation}
\end{problem}
\begin{remark}\label{Rem Integer Programming Problems}
\begin{enumerate}[(i)]
\item Observe that the constraints of the problem \ref{Pblm Students Assignment Problem} are only those of \textsc{Equation} \eqref{Eq Row Sum Column Sum Condition}; these are fully contained in the condition $ G \in \itG $.

\item Notice that although the search space of \textsc{Problem} \ref{Pblm Students Assignment Problem} is significantly bigger than the search space of \textsc{Problem} \ref{Pblm Instructors Assignment}, the optimum of the former need not be bigger or equal than the optimum of the latter. However, in practice, the numerical results below show that this is the case, not because of search spaces inclusion, but due to the overwhelming difference of sesarch space sizes.
\end{enumerate}
\end{remark}
In order to asses the enhancement introduced by the method, it is necessary to compute rates of optimal performance over the historical one i.e., if $ G_{h} \in \itG $, $ \pi_{h} \in \itS_{J} $ are respectively the historical group composition and instructors assignation for a given semester $ h $ then, the relative enhancement $ \rho_{\mt} $, due to a method $ \mt $ is given by 
\begin{align}\label{Eq Performance Rates}
& \rho_{\mt} \defining 
100\frac{v_{\mt} - \sum\limits_{j\, = \, 1}^{J} \big(T_{\apv} G_{h}\big) \big(j, \pi_{h}(j) \big) }
{\sum\limits_{j\, = \, 1}^{J} \big(T_{\apv} G_{h}\big) \big(j, \pi_{h}(j) \big) }
\, , &
& \mt \in \{ \ia, \sa\} .
\end{align} 
Finally, we describe in \textsc{Algorithm} \ref{Alg Optimization Algoritm} below the optimization algorithm
\begin{remark}[Economy Perspective: $\ia $ and $ \sa $ solutions as Pareto equilibria]\label{An Pareto Balance}
\begin{enumerate}[(i)]
\item The $ \ia $ and $ \sa $ problems are two  scheduling formulations driven by social welfare. The University as a central regulator agent aims to solve such scheduloing problems in order to improve the social welfare of its community (i.e, students and professors). Given a group assignment matrix $G$, the $ \ia $ method seeks to find the matching pairs $(\mathit{instructor}, \mathit{section})$ in order to maximize the total average performance of the instructors, subject to the constraint that each instructor must teach only one section. On the other hand, the $ \sa $ method seeks to find a group assignment matrix $G$, given a complete matching pairs $(\mathit{instructor},\mathit{section})$.  More specifically, a distribution of the students population that maximizes the total average performance.\footnote{ Notice that the $ \ia $ method is easire to implement than the $ \sa $ method, the former only requires to allocate the instructors, while the later requires a redistribution of the whole students population. } 

\item 
The solutions $ v_{\ia} $ and $ v_{\sa} $ are configurations corresponding to Pareto equilibria, i.e. situations where no individual can improve his/her welfare (success chances in this particular case) without decreasing the well-being of another individual of the system. In this same spirit, the parameters $ \rho_{\mt} $ are measures of deviation from the Pareto equilibrium. 
\end{enumerate}
\end{remark}
\begin{algorithm}[t]
\KwData{
Database: Assembled\_Data.csv. \\
Year ($ y $) and Semester ($ s $). \\
Analyzed Course: DC, IC, ..., NM.\\
Academic Performance Variable ($ \apv $): Grade, Pass/Fail. \\
Group Segmentation $ \big( I_{\ell} : \ell \in [L]\big) $.\\
APV\_Performance\_Instructor$ [\ell] ,\, \ell \in [L]$. \\
Optimization Method: $ \mt \in \{\ia, \sa\} $. }
\KwResult{Relative enhancement value $ \rho_{\mt} $ for method $ \mt $, for chosen course, year and semester. 
}
\textbf{Initialization}\;
Course\_Table $ \leftarrow $ \textbf{hash} \\
Assembled\_Data.csv[(Course = \textit{Analyzed Course}) \& (Year = $ y $) \& (Semester = $ s $) ]\;
Instructors\_List $ \leftarrow $ \textbf{hash} lecturer list from Course\_Table \;
Instructors\_Performance $ \leftarrow $ \textbf{hash} APV\_Performance\_Instructor[Instructor $ \in \text{Instructors\_List}$ ]\;
Section\_List $ \leftarrow $ \textbf{hash} section list from Course\_Table 

\For{ $ \ell \in [L] $ }{
	\For{ $ j \in J $ }{
	$ T_{\apv} (\ell, j) \leftarrow $ \textbf{hash} \\
	APV\_Instructors\_Performance[(Instructor = Instructors\_List$ (j) $ )
	\& (Segmentation = $ I_{\ell} $)]\;
	$ G_{h}(\ell, j) \leftarrow $ lenght(\textbf{hash} Course\_Table[(Section = $ j $) \& (Segmentation = $ I_{\ell} $)])
	}
}
\eIf{$ \mt = \ia $}{
	$ C_{\apv} \leftarrow T_{\apv} G_{h} $,
	$v_{\ia}  \leftarrow $ \textbf{solve} Problem \ref{Pblm Instructors Assignment}, \textbf{input}: $ C_{\apv} $.
	}
	{
	$ \p \leftarrow \Big(\sum_{j\, = \, 1}^{J}G(\ell, j) : \ell \in [L]\Big) $,
	$ \g \leftarrow \Big(\sum_{\ell\, = \, 1}^{L}G(\ell, i) : i \in [J]\Big) $,
	$ \pi \leftarrow $ Section\_List\;
	$v_{\sa}  \leftarrow $ \textbf{solve} Problem \ref{Pblm Students Assignment Problem}, \textbf{input}: $ ( T_{\apv}, \p, \g, \pi) $.
} 
$ \pi_{h} \leftarrow $ Section\_List\;
$ \rho_{\mt} \leftarrow $ \textbf{compute} Equation \eqref{Eq Performance Rates}, \textbf{input}: 
$ \big( T_{\apv} , G_{h}, \pi_{h}, v_{\mt}\big) $.
\caption{Optimization Algorithm}\label{Alg Optimization Algoritm}
\end{algorithm}
%
%
%
\subsection{Historical Assessment}\label{Sec Historical Assessment}
In the current section, we are to assess the enhancement of the proposed method with respect to the average of the historical results. To that end, we merely integrate \textsc{Algorithms} \ref{Alg Segmentation of Students}, \ref{Alg Instructors' performance} and \ref{Alg Optimization Algoritm} in a master algorithm going through a time loop to evaluate the performance of each semester and then store the results in a table, this is done in \textsc{Algorithm} \ref{Alg Analytica Omega}. It is important to observe that excepting for the database, all the remaining input data must be defined by the user.

The numerical results for the Differential Calculus course are summarized in \textsc{Table} \ref{Tb Historical Enhnacements} and illustrated in \textsc{Figure} \ref{Fig Historical Enhancement}. The results clearly show that the Students Assignment method (SA) yields better results than the Instructor Assignment method (IA), which holds for both Academic Performance Variables: \textit{Pass/Fail} and \textit{Average}. Such difference happens not only for the mean value, but on every observed instance (semester), this is due to the difference of size between search spaces for the problems \ref{Pblm Instructors Assignment} and \ref{Pblm Students Assignment Problem} as discussed in Remark \ref{Rem Integer Programming Problems}. On the other hand, it can be observed that the \textit{Pass/Fail} variable is considerably more sensitive to the optimization process than the \textit{Average} variable. Again, the phenomenon takes place not only for the enhancement's mean value, the former around three times the latter, but the domination occurs for every semester analyzed by the algorithm. The latter holds because, for an improvement on the \textit{Average} variable to occur, a general improvement in the students' grades should take place, while the improvement of the pass rate is not as demanding.

The results of the optimization methods yield similar behavior for all the remaining lower division courses. Consequently, in the following we will only be concerned with the analysis of the \textit{Pass/Fail} variable, which gives the title to the present paper. The two optimization methods will be kept for further analysis, not because of efficiencies (clearly SA yields better results), but because of the administrative limitations a Higher Education Institution could face when implementing the solution. Clearly, from the administrative point of view, it is way easier for an Institution implementing IA instead of SA, 
%
%
%
%

It is also important to mention, that although enhancements of 1.4 or 7 percent may not appear significant at first sight, the benefit is substantial considering the typical enrollments displayed in \textsc{Table} \ref{Tb Historical Enrollment Table}, as well as the average \textit{Number of Tries} a student needs to pass de course displayed in \textsc{Table} \ref{Tb Academic Performance Variables Average}. In addition, the fact that Latin American public universities heavily subside its students despite having serious budgeting limitations (as in our study case), gives more relevance to the method's results.   

\begin{algorithm}[H]
\KwData{
Database: Assembled\_Data.csv. \\
Analyzed Course: DC, IC, ..., NM.\\
Academic Performance Variable ($ \apv$ ): Grade, Pass/Fail. \\
Optimization Method: $ \mt \in \{\ia, \sa\} $. }
\KwResult{Table of Relative Enhancement Values $ \rho_{\mt} $ for chosen method, course and academic performance variable.
}
\textbf{Initialization}\;
\For{ Year $ \in [2010, 2017]$ }{
	\For{ Semester $ \in [2] $ }{
	\textbf{call} Algorithm \ref{Alg Segmentation of Students}, \textbf{input}: (\textit{Assembled\_Data.csv}, \textit{Year}, \textit{Semester}, \textit{Analyzed Course})\;
	\textbf{call} Algorithm \ref{Alg Instructors' performance}, \textbf{input}: (\textit{Assembled\_Data.csv}, \textit{Analyzed Course}, $ \apv $, Group Segmentation $ \big( I_{\ell}: \ell \in [L] \big) $)\;
	\textbf{call} Algorithm \ref{Alg Optimization Algoritm}, \textbf{input}:  (\textit{Assembled\_Data.csv},
	\textit{Year}, \textit{Semester}, \textit{Analyzed Course}, $ \apv $, Group Segmentation $ \big( I_{\ell}: \ell \in [L] \big) $, $ \mt $ )\;
	
	APV\_mt\_Assessment[Year, Semester]$ \leftarrow \rho_{\mt} $. 
	
	}
}
\caption{Historical Assessment Algorithm}\label{Alg Analytica Omega}
\end{algorithm}
\begin{figure}[h!]
        \centering
        \begin{subfigure}[Example DC. Enhancement Results $ \apv = $ \textit{Pass Rate}. ]
{\includegraphics[scale = 0.38]{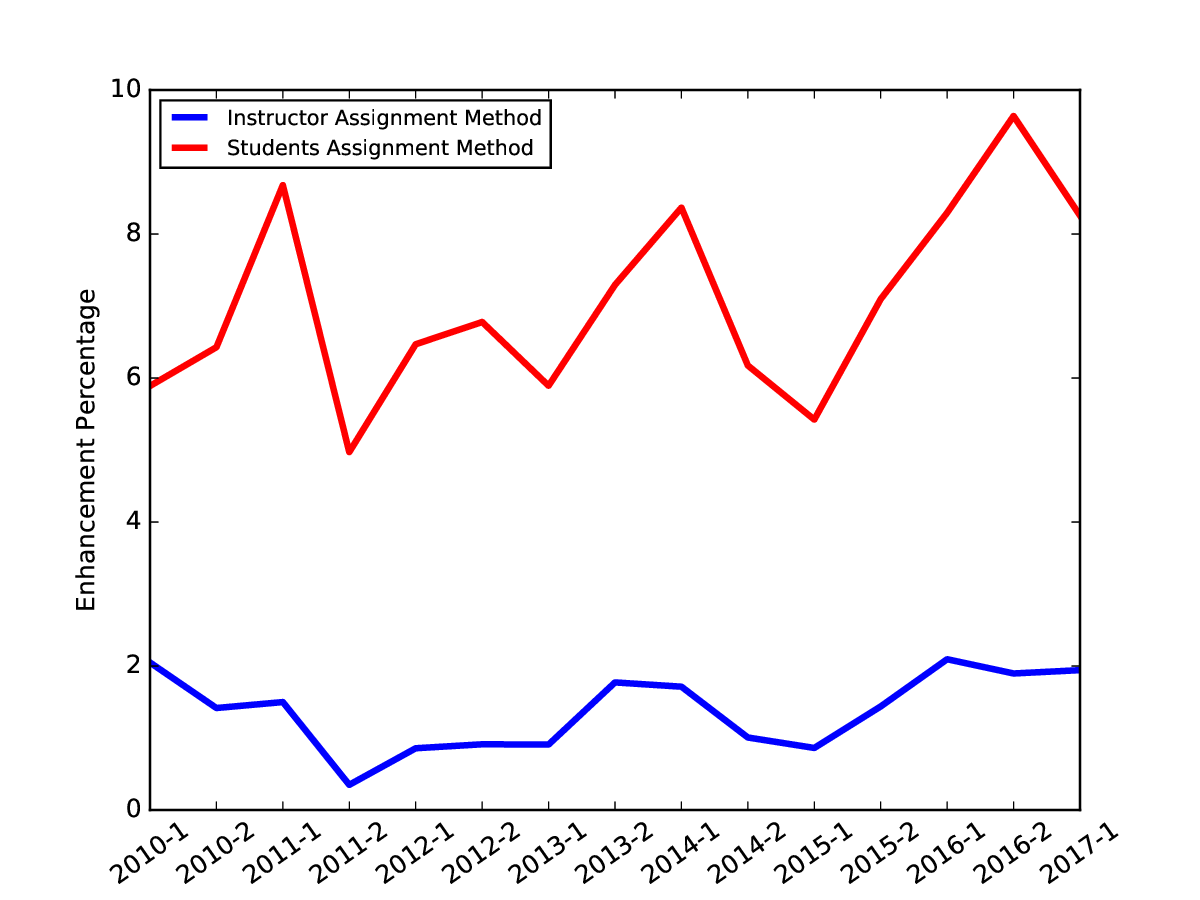} } 
        \end{subfigure}
        ~ 
          \begin{subfigure}[Example DC. Enhancement Results $ \apv = $ \textit{Average}.]
{\includegraphics[scale = 0.38]{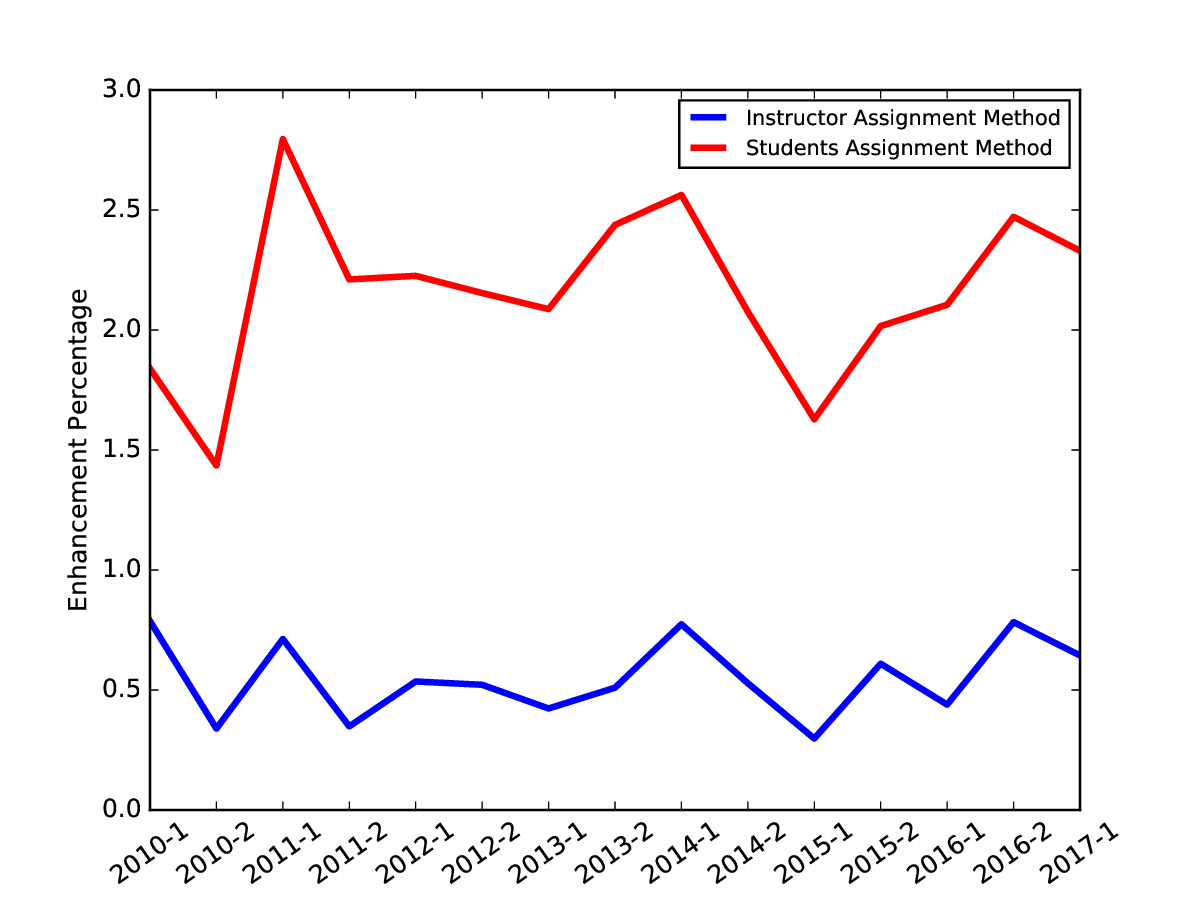} } 
        \end{subfigure}
\caption{Example: \textbf{Differential Calculus} course. Both figures show the enhancement results $ \rho_{\mt} $ for $ \mt \in \{ \ia, \sa\} $ optimization methods. The  Instructor Assignment method (IA) is depicted in blue, while the Students Assignment method (SA) is represented in red. 
Figure (a) shows results for the \textit{Pass/Fail} variable. 
Figure (b) shows the results for \textit{Average} variable. 
\label{Fig Historical Enhancement} }
\end{figure}
%
%
\begin{remark}[Economy Perspective: Figure \ref{Fig Historical Enhancement}]
 As it was already mentioned in the beginning of subsection \ref{Sec Historical Assessment}, the Students Assignment method ($ \sa $) yields better results than the Instructor Assignment method ($ \ia $). This is aligned with the following idea from the economic empirical perceptions: when individuals have more instruments to participate, their well-being in terms of social welfare increases.
\end{remark} 
%
%
%
%
\begin{table}[h!]
\def\arraystretch{1.4}
\small{
\begin{center}
\rowcolors{2}{gray!25}{white}
\begin{tabular}{p{1.5cm} | p{1.5cm} p{1.5cm} | p{1.5cm} p{1.5cm}  }
    \hline
    \rowcolor{gray!80}
Academic
&
\multicolumn{2}{c |}{\textit{APV} = Pass/Fail} &
\multicolumn{2}{c}{\textit{APV} = Average} \\
\rowcolor{gray!80}
Semester
& $ \mt = \ia $
& $ \mt = \sa $
& $ \mt = \ia $
& $ \mt = \sa $
 \\
\hline
2010-1	& 2.0482 &	5.8891  &  0.7868 &	1.8393 \\
2013-1	& 0.9090 &	5.8924	& 0.4230 & 2.0870 \\
2016-1	& 2.0939 &	8.2952	& 0.4391 &	2.1050 \\
\hline
\rowcolor{gray!80}
Mean & 1.3811 &	7.0432	& 0.5501 &	2.1584 \\
\hline
\end{tabular}
\end{center}
}
\caption{Relative Enhancements Sample, $ \rho_{\mt} $, 
	Course: \textbf{Differential Calculus}. We display the mean relative enhancement for the Differential Calculus course during the period from 2010-1 to 2017-1; together with a sample of three semesters in the same time window.}\label{Tb Historical Enhnacements}
\end{table}
%
%
%
%
%
%
%
%
\section{Randomization and Predictive Assessment}\label{Sec Randomization and Asymptotic Assesment}
%
%
%
%
%
%
So far, the method has been assessed with respect to the historical log i.e., comparing its optimization outputs with those of 15 recorded semesters. The aim of the present section is to perform Monte Carlo simulations on the efficiency of the method and apply the Law of Large Numbers to estimate the expected enhancement of the algorithm. We present below for the sake of completeness, its proof and details can be found in \cite{BillingsleyProb}.
\begin{theorem}[Law of Large Numbers]\label{Th the Law of Large Numbers}
Let $ \big(\Z^{(n)}:n\in \N\big) $ be a sequence of independent, identically distributed random variables with expectation $ \mu = \Exp(\Z^{(1)}) $, then
\begin{equation}\label{Eq the Law of Large Numbers}
\prob\bigg[ \Big\vert \frac{\Z^{(1)} + \Z^{(2)} + \ldots \Z^{(n)}}{n} - \mu \Big\vert > 0 \bigg]
\xrightarrow[n\, \rightarrow \,\infty]{} 0 ,
\end{equation}
i.e. , the sequence $ \big(\Z^{(n)}:n\in \N\big) $  converges to $ \mu $ in the Ces\`aro sense.
\end{theorem}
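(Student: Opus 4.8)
The plan is to prove the statement as a \emph{weak} law --- convergence in probability of the Cesàro averages to $\mu$ --- via a second--moment (Chebyshev) argument. This is legitimate here because every random variable $\Z^{(n)}$ produced by the randomized algorithm is an enhancement rate $\rho_{\mt}$ evaluated on a finite probability space, hence takes values in a bounded range and possesses a common finite variance $\sigma^{2} \defining \Var(\Z^{(1)}) < \infty$. So, although $\mu = \Exp(\Z^{(1)})$ is the only moment named in the hypothesis, in the intended application a second moment is automatically available, and invoking it is the cleanest route.

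First I would set $S_{n} \defining \Z^{(1)} + \cdots + \Z^{(n)}$ and record the two elementary moment identities. By linearity of expectation, $\Exp(S_{n}/n) = \mu$. By the (pairwise) independence of the $\Z^{(n)}$ together with additivity of variance over independent summands, $\Var(S_{n}/n) = n^{-2}\sum_{k\, = \, 1}^{n}\Var(\Z^{(k)}) = \sigma^{2}/n$. Next, fixing an arbitrary threshold $\varepsilon > 0$ and applying Chebyshev's inequality to $S_{n}/n$ gives
\begin{equation*}
\prob\Big[ \big\vert S_{n}/n - \mu \big\vert > \varepsilon \Big] \;\le\; \frac{\Var(S_{n}/n)}{\varepsilon^{2}} \;=\; \frac{\sigma^{2}}{n\,\varepsilon^{2}} \;\xrightarrow[n\,\rightarrow\,\infty]{}\; 0 .
\end{equation*}
Since $\varepsilon > 0$ is arbitrary, this is exactly the asserted convergence of $S_{n}/n$ to $\mu$ in probability (which is the content of \eqref{Eq the Law of Large Numbers} once one reads the displayed event with an arbitrary tolerance).

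I do not expect a deep obstacle in the bounded setting --- the only point to be careful about is \emph{not} to overclaim, i.e. not to attempt an almost--sure upgrade, which would additionally require something like a fourth--moment bound plus Borel--Cantelli, or Kolmogorov's maximal inequality. If one insisted on the theorem under the stated hypothesis alone (finite first moment, Khintchine's form), the Chebyshev step would fail and the genuinely technical piece would be a truncation argument: write $\Z^{(k)} = \Z^{(k)}\ind\{|\Z^{(k)}| \le n\} + \Z^{(k)}\ind\{|\Z^{(k)}| > n\}$, bound the truncated sum by Chebyshev (its variance is $O(n)$ after division by $n^{2}$) and control the tail part using $\Exp\big(|\Z^{(1)}|\,\ind\{|\Z^{(1)}| > n\}\big) \to 0$, together with a Markov estimate. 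That bookkeeping is the only hard part, and it is unnecessary for the Monte Carlo application developed in this section.
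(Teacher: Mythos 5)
Your argument is correct, but there is nothing in the paper to compare it against: despite the sentence ``We present below for the sake of completeness, its proof\dots'', the paper never actually proves Theorem~\ref{Th the Law of Large Numbers}; it simply states it and refers the reader to Billingsley. So your Chebyshev proof is, in effect, the only proof on the table, and it is the standard one for the weak law under a finite second moment: $\Exp(S_{n}/n)=\mu$ by linearity, $\Var(S_{n}/n)=\sigma^{2}/n$ by independence, and Chebyshev gives $\prob\big[\vert S_{n}/n-\mu\vert>\varepsilon\big]\le\sigma^{2}/(n\varepsilon^{2})\to 0$. Your justification that $\sigma^{2}<\infty$ holds automatically in the intended application (the $\Z^{(n)}$ are enhancement rates computed on finite, uniform probability spaces, hence bounded) is exactly the right observation, and it is what makes the Chebyshev route legitimate even though the theorem as stated only names a first moment. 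Two further points in your favor: you correctly read the displayed event $\vert S_{n}/n-\mu\vert>0$ as a typo for $\vert S_{n}/n-\mu\vert>\varepsilon$ with $\varepsilon>0$ arbitrary --- as literally written the claim is false (e.g.\ for nondegenerate discrete variables the average essentially never equals $\mu$ exactly) --- and you correctly resist upgrading to an almost-sure statement, which the paper's phrasing does not require and which would need Kolmogorov's inequality or a fourth-moment/Borel--Cantelli argument. The closing remark about the truncation argument needed for Khintchine's first-moment-only version is accurate and appropriately flagged as unnecessary here.
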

In order to achieve Monte Carlo simulations, we first randomize several factors/variables which define the setting of a semester for each course, in Section \ref{Sec Randomization of variables}. Next we discuss normalization criteria in Section \ref{Sec Normalization of the method}, to make the enhancement simulations comparable. Finally, we present in Section \ref{Sec Numerical Simulations}, the Monte Carlo simulations results for both, the random variable simulating the benefits of the method ($ \Z^{(n)} $ in Theorem \ref{Th the Law of Large Numbers}) as well as the evolution of its Ces\'aro means ($ \frac{\Z^{(1)} + \Z^{(2)} + \ldots + \Z^{(n)}}{n} $ in Theorem \ref{Th the Law of Large Numbers}) to determine the asymptotic performance of the proposed algorithm.   

Throughout this section we adopt a notational convention, the label \textbf{RandInputAlgorithm} will refer to the random versions of the respective algorithm developed in the previous sections. For instance
RandInputAlgorithm \ref{Alg Optimization Algoritm}, \textbf{input}:  
(\underline{Group Assignment Matrix  \textit{G} , List of Lecturers \textit{L\_list} }, \textit{Analyzed Course}, $ \apv $, Group Segmentation $ \big( I_{\ell}: \ell \in [L] \big) $, $ \mt $ ), refers to Algorithm \ref{Alg Optimization Algoritm} above, but with a different set of input data; for clarity the randomly generated input data are underlined. This notation is introduced for exposition brevity: avoiding to write an algorithm whose logic is basically identical to its deterministic version. 
%
%
%
%
\subsection{Randomization of Variables}\label{Sec Randomization of variables}
%
%
%
%
Four factors will be randomized in the same fashion: Number of Tenured Lecturers, Number of Enrolled Students, List of Students' GPA and Number of Groups. First, we randomize the integer-valued statistical variables by merely computing  95 percent confidence intervals from the empirical data and then assuming that, the impact of the factor can be modeled by a random variable uniformly distributed on such confidence interval, see \cite{MendenhallBeaver}.
\begin{definition}\label{Def Confidence Interval}
Let $ x $ be a scalar statistical variable with mean $ \bar{x} $, standard deviation $ \sigma $ and $ n $ its sample size.
\begin{enumerate}[(i)]
\item If $ x $ is real-valued, its \textbf{95 percent confidence interval} is given by 
\begin{equation}\label{Eq Confidence Interval Real-Valued}
 I_{x} \defining \Big[\bar{x} - 1.96 \, \frac{\sigma}{\sqrt{n}}, \bar{x} + 1.96 \, \frac{\sigma}{\sqrt{n}} \Big] .
\end{equation}
\item 
If $ x $ is integer-valued, its \textbf{95 percent confidence interval} is given by 
\begin{equation}\label{Eq Confidence Interval Integer-Valued}
 I_{x} \defining \Big[\big\lfloor\bar{x} - 1.96 \, \frac{\sigma}{\sqrt{n}} \big\rfloor,
 \big\lceil \bar{x} + 1.96 \, \frac{\sigma}{\sqrt{n}} \big\rceil\Big] \cap \setZ,
\end{equation}
where $ x $, $ \lfloor \cdot \rfloor ,\lceil \cdot \rceil: \R \rightarrow \R $ denote the floor and ceiling functions respectively. 
\end{enumerate}
\end{definition}
The randomization of the statistical variables listed above, heavily relies on the empirical distributions mined from the database.
\begin{hypothesis}\label{Hyp Variables Randomization}
\begin{enumerate}[(i)]
\item 
Let $ x $ be a scalar statistical variable, then its associated random variable $ \X $ is uniformly distributed on its confidence interval $ I_{x} $, i.e. $ \X \sim \unif (I_{x}) $, where the confidence interval is defined by \eqref{Eq Confidence Interval Integer-Valued} or \eqref{Eq Confidence Interval Real-Valued} depending on whether the variable $ x $ is integer or real valued.

\item Let $ \mathbf{x} = \big(x_{i},  \ldots, x_{d} \big)\in \R^{d} $ be a vector statistical variable, then its associated random variable is given by $ \textbf{X} = \big(\X_{1} , \ldots, \X_{d} \big) $, where $ \X_{i}  $ is the random variable associated to $ x_{i} $ for all $ i \in [d] $ as defined above.
\end{enumerate}
\end{hypothesis} 
From here, it is not hard to compute the confidence intervals (or ranges) of the random variables as it is shown in \textsc{Tables} \ref{Tb Tenured Random Variable} and \ref{Tb Enrollement Random Variable}. In contrast, the \textit{Sections} and the \textit{GPA} variables will need further considerations in its treatment. 

\begin{table}[h!]
\def\arraystretch{1.4}
\small{
\begin{center}
\rowcolors{2}{gray!25}{white}
\begin{tabular}{ c | c c c c c c c c }
    \hline
    \rowcolor{gray!80}
\diagbox 
{PARAMETERS}{COURSE}
& DC
& IC
& VC
& VAG
& LA
& ODE
& BM
& NM
 \\
\hline
Upper Extreme &
8 &	5 &	3 & 7 &	5 &	4 &	4 &	2 \\
Lower Extreme &
6 &	3 &	2 &	4 &	3 &	3 &	2 &	1 \\
Average &
7.2667 & 4.0667	& 2.6000 &	5.1333 & 3.7333	& 3.5333 & 3.3333 &	1.6667 \\
Standard Deviation &
0.7037 &	1.1629	& 0.7368 & 1.9952 &	1.2228 & 0.9155	& 1.0465 & 0.6172 \\
\hline
\end{tabular}
\end{center}
}
\caption{Random Variable: Number of Tenured Instructors $ \upnt $, 
	Course: \textbf{All}.  The upper \& lower extremes, average and standard deviation for the random variable ``Number of Tenure Instructors NT'' across all courses are displayed.}\label{Tb Tenured Random Variable}
\end{table}
\begin{table}[h!]
\def\arraystretch{1.4}
\scriptsize{
\begin{center}
\rowcolors{2}{gray!25}{white}
\begin{tabular}{ c | c c c c c c c c }
    \hline
    \rowcolor{gray!80}
\diagbox 
{PARAMETERS}{COURSE}
& DC
& IC
& VC
& VAG
& LA
& ODE
& BM
& NM
 \\
\hline
Upper Extreme &
1554 &	1203 &	586	& 1243	& 1045	& 882 & 974	& 301 \\
Lower Extreme &
1337 &	1043 & 513 & 1050 &	932	& 721 &	837 & 234 \\
Average &
1445.9333 &	1122.8667 &	549.8000 & 1146.5333 &	988.6667 &	801.8000 &	905.4000 &	267.5333 \\
Standard Deviation &
213.4446 &	156.8642 &	70.9969	& 188.8304 & 110.7229 &	158.4433 &	134.2971 &	65.4847 \\
\hline
\end{tabular}
\end{center}
}
\caption{Random Variable: Number of Enrolled Students $ \upne $, 
	Course: \textbf{All}. 
The upper \& lower extremes, average and standard deviation for the random variable ``Number of Tenure Students NE'' across all courses are displayed.
}\label{Tb Enrollement Random Variable}
\end{table}

The \textit{Sections} variable is a list of several sections with different capacities. A statistical scan of the data shows that this list is a most unpredictable variable, because section capacities range from 15 to 150 with very low relative frequencies in each of its values. Consequently, it was decided to group the section capacities in integer intervals
\begin{definition}\label{Def Gathering of Section Capacities}
Given the list of integer intervals 
\begin{equation}\label{Eq Intervals Agrupation} 
\mathcal{I} \defining \big\{ [15, 30], [31, 45], [46, 60], [61, 75], [76, 90], 
[91, 105], [ 106, 120], [121, 135], [136, 150]\big\},  
\end{equation}
%
%
%
%
for each semester and for each course, the \textbf{sections frequency variable} is given by 
\begin{equation}\label{Eq Sections Frequency Variable}
\boldsf \defining  
\Big(\frac{\itns_{I}}{\sum\limits_{K\, \in \,\mathcal{I} } \itns_{K} } : I \in \mathcal{I} \Big) ,
\end{equation}
where $ \itns_{I} $ is the number of sections whose capacity belongs to the interval $ I \in \mathcal{I} $.
%
\end{definition}
\begin{hypothesis}\label{Hyp Group Sizes Variable}
The Sections variable $ \textbf{S} $ is completely defined by the number of groups variable $ \upns $ in the following way
\begin{equation}\label{Eq Group Sizes Variable}
\textbf{S} \defining \Big\lceil \upns \, \overline{\boldsf} \Big\rceil .
\end{equation}
Here $ \overline{\boldsf} $ is the average vector of $ \boldsf $ introduced in \textsc{Equation} \eqref{Eq Sections Frequency Variable} and it is understood that the ceiling function $ \lceil \cdot \rceil $ applies to each component of the vector. 
\end{hypothesis}
Finally, the \textit{GPA} variable is treated as follows
%
%
%
%
\begin{hypothesis}\label{Hyp GPA random variable}
For each semester define $\mathbf{x} \defining \big(x_{i}: i\in [50] \big)$, where $ x_{i} $ is the relative frequency of registering students whose GPA is equal to $ \frac{i}{10} $; in particular $ \sum_{i \, \in\, [50] } x_{i} = 1 $. Let $ \textbf{X}_{\gpa} $ be the associated random variable to the list of relative frequencies $ \mathbf{x} $, as introduced in \textsc{Hypothesis} \ref{Hyp Variables Randomization}. Then, the random variable $ \textbf{GPA} $ is given by
\begin{equation}\label{Eq GPA random variable}
\textbf{GPA}\defining \big\lceil \upne \, \textbf{X}_{\gpa} \big\rceil ,
\end{equation}
where $ \upne $ is the number of enrolled students random variable and it is understood that the ceiling function $ \lceil \cdot \rceil $ applies to each component of the vector. 
\end{hypothesis} 
\begin{remark}\label{Rem Sections and GPA rand var}

Notice that both random variables $ \textbf{S} $ and $ \textbf{GPA} $ are the product of a scalar and a vector. However, for $ \textbf{S} $ the scalar is a random variable $ \upns $ and the vector $ \overline{\boldsf} $ is deterministic, while in the case of $ \textbf{GPA} $ the scalar $ \upne $ and the vector $ \textbf{X}_{\text{GPA}} $ are random variables. There lies the difference in the randomization of the vector variables.

\end{remark}
So far, $ \textbf{S} = \big( s_{K}^{(1)}: K\in \mathcal{I}\big) $ is producing a list of sections whose capacity lies within the ranges declared in $ \mathcal{I} $ (\textsc{Equation} \eqref{Eq Intervals Agrupation}), this introduces a set of slacks which will be used later on, to match the number of enrolled students $ \upne $ with the total sections capacity. The match will be done in several steps, once the equality $ \sum_{K\, \in \, \mathcal{I} }s_{K} = \upne $ is attained, a group assignment matrix $ G $ (as in Definition \ref{Def Performance and Group Assignment Matrices} (i)) will be generated randomly.
\begin{enumerate}[{step} 1.]
\item\label{Step 1 Initial Approximation Solution} Solve the following Data Fitting Problem (see \cite{Bertsimas} for its solution)
\begin{problem}\label{Pblm Optimization Students Sections Capacity}
Given two realizations of $ \textbf{S} = \big( s_{K}^{(1)}: K\in \mathcal{I}\big) $ and $ \upne $, consider the integer problem
\begin{equation}\label{Eq Optimization Students Sections Capacity}
df^{(1)} = \min \bigg\{\Big\vert \sum_{K\, \in \, \mathcal{I}}\sum_{i\, = \, 1}^{s_{K}^{(1)}} x_{K, i}
- \upne \Big\vert :  
x_{K, i}\, \in K , \text{for all } i \in \big[ s_{K}^{(1)} \big] \text{ and }  K \in \mathcal{I} \bigg\}.
\end{equation}
Denote by $ \big(x_{K, i}^{(1)}: i \in [s_{K}^{(1)}], K \in \mathcal{I}\big) $ the optimal solution to problem \eqref{Eq Optimization Students Sections Capacity}. If $ df^{(1)} \equiv 0 $ then jump to \textsc{step} \ref{Step 4 Initial Generate randomly a group assignment matrix} below.
%
\end{problem}
%
%

\item\label{Step 2 Increase/Decrease Sections} Decide whether or not is more convenient increase or decrease the number of sections $ s_{K}^{(1)} \mapsto s_{K}^{(2)} $ according the case, using the \textbf{Greedy Algorithm} \ref{Alg Greedy Algorithm}, to modify the sections' capacity, starting from the large sections to the small ones and get
\begin{equation}\label{Eq Opening or closineg sections}
df^{(2)} \defining \Big\vert \sum_{K\, \in \, \mathcal{I}}\sum_{i\, = \, 1}^{s_{K}^{(2)}} 
x_{K, i}^{(1)}
- \upne \Big\vert < 
df^{(1)} 
.
\end{equation}
\begin{algorithm}[H]
\KwData{
$ \big(s_{K}^{(1)}: K \in \mathcal{I} \big) $ ,
$  \Sigma = \sum_{K\, \in \, \mathcal{I}}\sum_{i\, = \, 1}^{s_{K}^{(1)}} x_{K, i}^{(1)} $,
$ \upne $. }
\KwResult{New set of sections quantities $ \big(s_{K}^{(2)}: K \in \mathcal{I} \big) $.
}
\textbf{Initialization}\;
\textbf{sort} the list of intervals $ \mathcal{I} $ from large values to small ones\;
	\eIf{ $ \Sigma > \upne $ }{
	\textbf{define} $ df^{(2)} \defining \Sigma - \upne $, $ K^{(r)} \defining \text{ right extreme of } K $, for all $ K \in \mathcal{I} $\;
	\While{ $ df^{(2)} > \min\{K^{(r)}: K \in \mathcal{I} \} $ }{
		\If{ $ \Sigma - \upne > K^{(r)}  $  }{
		$ s_{K}^{(2)} = s_{K}^{(1)} - 1 $, $ df^{(2)} = df^{(2)} - K^{(r)} $
		}
		}
		}
	{
	\textbf{define} $ df^{(2)} \defining \upne - \Sigma $, $ K^{(\ell)} \defining \text{ left extreme of } K $, for all $ K \in \mathcal{I} $\;
	\While{ $ df^{(2)} > \min\{K^{(\ell)}: K \in \mathcal{I} \} $ }{
	\If{ $ \Sigma - \upne > K^{(\ell)}  $  }{
	$ s_{K}^{(2)} = s_{K}^{(1)} + 1 $, $ df^{(2)} = df^{(1)} - K^{(\ell)} $
		}
		}
	}
\caption{Greedy Algorithm Increase/Decrease Number of Sections}\label{Alg Greedy Algorithm}
\end{algorithm}
If $ df^{(2)} \equiv 0 $ then jump to \textsc{step} \ref{Step 4 Initial Generate randomly a group assignment matrix} below.

\item\label{Step 3 Modify Sections Capacity} If $ 0 < df^{(2)} $ (once the Greedy Algorithm \ref{Alg Greedy Algorithm} has been applied), apply \textbf{increase/decrease capacities Algorithm} \ref{Alg Greedy Algorithm Sections Capacity} (breaking the constraints $ x_{K, i}^{(2)} \in K $ of \textsc{Equation} \eqref{Eq Optimization Students Sections Capacity}). Firstly changing $ x_{K, i}^{(1)} \mapsto x_{K, i}^{(2)} $ as evenly as possible. Secondly, distributing the reminder in randomly chosen sections $ x_{K, i}^{(2)} \mapsto x_{K, i}^{(3)} $ and get
\begin{equation}\label{Eq Evenly Distribute Excess or Absence}
df^{(3)} \defining \Big\vert \sum_{K\, \in \, \mathcal{I}}\sum_{i\, = \, 1}^{s_{K}^{(2)}} 
x_{K, i}^{(3)}
- \upne \Big\vert \equiv 0 < df^{(2)}.
\end{equation}
\begin{algorithm}[H]
\KwData{
$ df^{(2)} $, 
$ \big(s_{K}^{(2)}: K \in \mathcal{I} \big) $ ,
$  \Sigma = \sum_{K\, \in \, \mathcal{I}}\sum_{i\, = \, 1}^{s_{K}^{(2)}} x_{K, i}^{(1)} $,
$ \upne $. }
\KwResult{New set of sections capacities $ \big(x_{K, i}^{(2)}: i \in [ s_{K}^{(2)} ], K \in \mathcal{I} \big) $.
}
\textbf{initialization}\;
\textbf{sort} the list of sections capacities $ \big( x_{K, i}^{(1)}: i \in [s_{K}^{(2)}], K \in \mathcal{I} \big) $ from large values to small ones\;
\textbf{define} total number of sections $ s \defining \sum_{K \in \mathcal{I} } s_{K}^{(2)} $\; 
	\eIf{ $ \Sigma > \upne $ }{
	\textbf{define} $ u = \big\lfloor  \dfrac{\Sigma - \upne }{s} \big\rfloor $ \;
	\textbf{define} $ x_{K, i}^{(2)} \defining x_{K, i}^{(1)} - u $ for all $ i \in [ s_{K}^{(2)} ], K \in \mathcal{I} $\;
	\textbf{choose} $ \Sigma - \upne - u \times s $ sections $ K \in \mathcal{I}, i \in [s_{K}^{2} ] $ denote this set by $ S $\;
	\eIf{ $ (K, i) \in S $ }{
	$ x_{K, i}^{(3)} \defining x_{K, i}^{(2)} - 1 $
	}
	{
	$ x_{K, i}^{(3)} \defining x_{K, i}^{(2)} $
	}
	}
	{
	\textbf{define} $ u = \big\lfloor  \dfrac{\upne - \Sigma }{s} \big\rfloor $ \;
	\textbf{define} $ x_{K, i}^{(2)} \defining x_{K, i}^{(1)} + u $ for all $ i \in [ s_{K}^{(2)} ], K \in \mathcal{I} $\;
	\textbf{choose} $ \upne - \Sigma - u \times s $ sections $ K \in \mathcal{I}, i \in [s_{K}^{2} ] $ denote this set by $ S $\;
	\eIf{ $ (K, i) \in S $ }{
	$ x_{K, i}^{(3)} \defining x_{K, i}^{(2)} + 1 $
	}
	{
	$ x_{K, i}^{(3)} \defining x_{K, i}^{(2)} $
	}
	}
\caption{Greedy Algorithm Increase/Decrease Sections' Capacity}
\label{Alg Greedy Algorithm Sections Capacity}
\end{algorithm}

\item\label{Step 4 Initial Generate randomly a group assignment matrix} Generate randomly a group assignment matrix $ G $.
\end{enumerate}
The random setting described above is summarized in the pseudocode \ref{Alg Random Setting MC Analytica Omega}

\begin{algorithm}[t]
\KwData{ 
$ \upne $ random variable distribution, 
$ \textbf{X}_{\gpa} $ random variable distribution \\
$ \upns $ random variable distribution, average sections frequency $ \overline{\boldsf} $ variable \\
Analyzed Course: DC, IC, ..., NM. \\
List of Tenured Lecturers}
\KwResult{Random Group Assignment Matrix $ G  $.\\
}
\textbf{Initialization}\;
\textbf{compute} a realization of $ \upne $ and a realization for $ \textbf{X}_{\gpa} $\;
\textbf{compute} $ \textbf{GPA}, $ \textbf{input}: ($  \upne, \textbf{X}_{\gpa} $)\;
\textbf{call} RandInputAlgorithm \ref{Alg Segmentation of Students}, \textbf{input}: 
(\underline{\textbf{GPA} list})\;
\textbf{compute} a realization of $ \upns $ \;
\textbf{compute} $ \textbf{S} $, \textbf{input}: ($  \upns, { \overline{\boldsf} }$, \textit{Analyzed Course})\;
\textbf{solve} Problem \ref{Eq Optimization Students Sections Capacity} \textbf{input}: ($ \textbf{S}, \upne $)\;
\eIf{ $ df^{(1)} > 0 $ }{
	\textbf{run} the increase/decrease number of sections Greedy Algorithm \ref{Alg Greedy Algorithm}\;
	\eIf{ $ df^{(2)} > 0 $ }{
		\textbf{run} the increase/decrease capacities algorithm 
		\ref{Alg Greedy Algorithm Sections Capacity} \; 
		$ \textbf{S} \leftarrow \big(x_{K, i}^{(3)}: i\in [s_{K}^{(2)}], K \in \mathcal{I} \big) $\;
		\textbf{return} $ \textbf{S} $
		}
		{$ \textbf{S} \leftarrow \big(x_{K, i}^{(3)}: i\in [s_{K}^{(2)}], K \in \mathcal{I} \big) $\;
		\textbf{return} $ \textbf{S} $}
	}
	{$ \textbf{S} \leftarrow \big(x_{K, i}^{(3)}: i\in [s_{K}^{(1)}], K \in \mathcal{I} \big) $\;
	\textbf{return} $ \textbf{S} $ }

\textbf{compute} a random group matrix assignment $ G $, \textbf{input}: $ (\textbf{S},\textbf{GPA} ) $\;
\caption{Random Setting Algorithm}
\label{Alg Random Setting MC Analytica Omega}
\end{algorithm}
We close this section displaying tree tables. \textsc{Table} \ref{Tb Sections Random Variable} contains the confidence intervals for the number of sections random variable $ \upns $.  The capacities distribution vector $ \overline{\boldsf}  $ , as well as the confidence intervals of $ \textbf{X}_{\gpa} $ are shown in \textsc{Table} \ref{Tb GPA Random Variable} for the course of Differential Calculus only; due to the $ \gpa $ range length, the table has been split in five rows to fit the page format. Finally, \textsc{Table} \ref{Tb Sections Realization Example} presents an example of a group assignment matrix $ G $ produced by Algorithm \ref{Alg Random Setting MC Analytica Omega}
\begin{table}[h!]
\def\arraystretch{1.4}
\small{
\begin{center}
\rowcolors{2}{gray!25}{white}
\begin{tabular}{ c | c c c c c c c c }
    \hline
    \rowcolor{gray!80}
\diagbox 
{PARAMETERS}{COURSE}
& DC
& IC
& VC
& VAG
& LA
& ODE
& BM
& NM
 \\
\hline
Upper Extreme &
20	& 11 &	5 &	17 & 10	& 7	& 17 &	3 \\
Lower Extreme &
16	& 9	& 3	& 15 &	8 &	5 &	13 & 1 \\
$ [15, 30] $  &
0.0202  & 0.0000 &	0.0133  & 0.0000	& 0.0000	& 0.0000	& 0.0064	& 0.0000 \\
$ [31, 45] $ &
0.0030  & 0.0000 &	0.0000	& 0.0000	& 0.0000	& 0.0000	& 0.0712	& 0.0000 \\
$ [46, 60] $ &
0.0403	& 0.0310 &	0.0000	& 0.0044	& 0.0000	& 0.0000	& 0.0609	& 0.0000 \\
$ [61, 75] $ &
0.3802	& 0.0330 &	0.0000	& 0.7112	& 0.0572	& 0.0222	& 0.6125	& 0.0667 \\
$ [76, 90] $ &
0.1155	& 0.0048 &	0.0000	& 0.1134	& 0.0083	& 0.0000	& 0.2056	& 0.0000 \\
$ [91, 105] $ &
0.1034	& 0.0588 &	0.0133	& 0.0675	& 0.0763	& 0.0095	& 0.0300	& 0.0333 \\
$ [106, 120] $ &
0.1640	& 0.2115 &	0.0300	& 0.1035	& 0.2939	& 0.0429	& 0.0133	& 0.0333 \\
$ [121, 135] $ &
0.0629	& 0.3417 &	0.2600	& 0.0000	& 0.1791	& 0.3937	& 0.0000	& 0.2889 \\
$ [135, 150] $ &
0.1104	& 0.3193 &	0.6833	& 0.0000	& 0.3852	& 0.5317	& 0.0000	& 0.5778 \\
\hline
\end{tabular}
\end{center}
}
\caption{Random Variable Number of Sections $ \upns $ and Capacities Distribution Vector 
	$ \overline{\boldsf} $, 
	Course: \textbf{All}. 
The upper \& lower extreme and confidence intervals for the Variable number of sections $ \upns $ are displayed  for all courses.}\label{Tb Sections Random Variable}
\end{table}
\begin{table}[h!]
\def\arraystretch{1.4}
\scriptsize{
\begin{center}
\begin{tabular}{ c | c c c c c c c c c c}
    \hline
    \rowcolor{gray!80}
\diagbox 
{INTERVALS}{GPA}
& 0.1
& 0.2
& 0.3
& 0.4
& 0.5
& 0.6
& 0.7
& 0.8
& 0.9
& 1.0
 \\
\hline
Upper Extreme &
0.0021 &	0.0027 & 0.0031 & 0.0031 &	0.0035	& 0.0031	& 0.0044	& 0.0040 &	0.0047 & 0.0046\\
\rowcolor{gray!25}
Lower Extreme &
0.0012& 0.0013 &	0.0017	& 0.0017 &	0.0024	& 0.0016	& 0.0031	& 0.0021 &  0.0020 & 0.0028 \\ 
\hline
\rowcolor{gray!80}
\diagbox 
{INTERVALS}{GPA}
& 1.1
& 1.2
& 1.3
& 1.4
& 1.5
& 1.6
& 1.7
& 1.8
& 1.9
& 2.0
 \\
\hline
Upper Extreme & 
0.0047	& 0.0045	& 0.0055	& 0.0053	& 0.0055	& 0.0056	& 0.0051	& 0.0073	& 0.0082 & 0.0084\\
\rowcolor{gray!25}
Lower Extreme &
0.0026	& 0.0025	& 0.0032	& 0.0028	& 0.0031	& 0.0038	& 0.0033	& 0.0049	& 0.0058 & 0.0056
\\
\hline
\rowcolor{gray!80}
\diagbox
{INTERVALS}{GPA}
& 2.1
& 2.2
& 2.3
& 2.4
& 2.5
& 2.6
& 2.7
& 2.8
& 2.9
& 3.0
 \\
\hline
Upper Extreme &
0.0114	& 0.0119	& 0.0147	& 0.0191	& 0.0208	& 0.0277	& 0.0328	& 0.0348	& 0.0420 & 0.0858\\
\rowcolor{gray!25}
Lower Extreme & 
0.0077	& 0.0076	& 0.0105	& 0.0142	& 0.0165	& 0.0210	& 0.0250	& 0.0287	& 0.0332 & 0.0672
\\
\hline
\rowcolor{gray!80}
\diagbox 
{INTERVALS}{GPA}
& 3.1
& 3.2
& 3.3
& 3.4
& 3.5
& 3.6
& 3.7
& 3.8
& 3.9
& 4.0
 \\
\hline
Upper Extreme &
 0.0571	& 0.0566	& 0.0671	& 0.0668	& 0.0630	& 0.0668	& 0.0625	& 0.0527 & 0.0496 & 0.0421
\\
\rowcolor{gray!25}
Lower Extreme &
0.0506	& 0.0489	& 0.0565	& 0.0594	& 0.0558	& 0.0549 &	0.0526	& 0.0440	& 0.0404 & 0.0357
\\
\hline
\rowcolor{gray!80}
\diagbox
{INTERVALS}{GPA}
& 4.1
& 4.2
& 4.3
& 4.4
& 4.5
& 4.6
& 4.7
& 4.8
& 4.9
& 5.0
 \\
\hline
Upper Extreme &
0.0373	& 0.0255	& 0.0209	& 0.0161	& 0.0136	& 0.0081	& 0.0055	& 0.0029	& 0.0011 & 0.0002
\\
\rowcolor{gray!25}
Lower Extreme &
0.0282	& 0.0201	& 0.0160	& 0.0115	& 0.0088	& 0.0054	& 0.0032	& 0.0014	& 0.0002 & 0.0000
\\
\hline
\end{tabular}
\end{center}
}
\caption{Random Variable: $ \textbf{X}_{\gpa} $, 
	Course: \textbf{Differential Calculus}. 
	The distribution capacities vector $ \overline{\boldsf}  $ and the confidence intervals of $ \textbf{X}_{\gpa} $ for the course of Differential Calculus are displayed. }\label{Tb GPA Random Variable}
\end{table}
\subsection{Normalization of the method and probabilistic spaces}\label{Sec Normalization of the method}
In order to measure the proposed method's enhancement, now there is need to normalize the results as in the case of the historical assessment of the algorithm, \textsc{Section} \ref{Sec Optimization and Historical Assessment}, \textsc{Equation} \eqref{Eq Performance Rates} where the improvement in the academic performance variable was divided over the historical performance of the semester at hand. In the case of Monte Carlo simulations, the concept of ``historical performance" simply does not apply, as the assignation of students and/or lecturers actually did not happen. We approach this fact in two different ways
\begin{definition}[Normalization Methods]\label{Def Normalization Methods}
We introduce the following normalization methods.
\begin{enumerate}[(a)]
\item\label{Mthd Random Normalization} \textbf{Random Normalization}. Normalize with respect to a random assignation of instructors or students, depending on the method \textbf{IA} or \textbf{SA} respectively.

\item\label{Mthd Expected Normalization} \textbf{Expected Normalization}. Normalize with respect to the \textit{expected assignation} of instructors or students, depending on the method \textbf{IA} or \textbf{SA} respectively. 
\end{enumerate}
\end{definition}
In the first case, it is straightforward to compute the normalization, in the second case, the concept of \textit{expected assignation} needs to be stated in neater terms. To that end, we need to present some intermediate mathematical results and definitions
%
%
%
%
\begin{theorem}\label{Th expected performance molecule computation}
Let $ K \in \N $ be fixed and let $\big( T(i, k): i, k\in [K]\big) $ be a matrix. Define the Random Variable 
\begin{align*}
& \X_{ \ia }: \itS_{K} \rightarrow \R, & 
& \X_{ \ia }(\sigma) \defining \sum\limits_{k \, \in \,  [K]} T\big(k, \sigma(k)\big) .
\end{align*}
Then,
\begin{equation}\label{Eq expected performance molecule computation}
\Exp\big(\X_{ \ia }\big) = 
\frac{1}{K}\sum\limits_{i, k \, \in\, [K]} T(k, i)  = \frac{1}{K}\, \msum(T),
\end{equation}
where $ \msum(T) \defining \sum\limits_{(k,i) \in [K] \times [K]}T(k, i)  = \sum\limits_{k \,  \in \, [K] }\sum\limits_{i \, \in  [K]} T(k, i) $.
\end{theorem}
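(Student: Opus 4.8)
The plan is to compute the expectation directly from the definition of expectation on the uniform probability space $\itS_{K}$, exploiting the fact that the quantity of interest is a sum of single-entry terms $T(k,\sigma(k))$ and then interchanging the order of summation. First I would write
\begin{equation*}
\Exp\big(\X_{\ia}\big) = \frac{1}{K!}\sum_{\sigma \in \itS_{K}} \X_{\ia}(\sigma) = \frac{1}{K!}\sum_{\sigma \in \itS_{K}} \sum_{k \in [K]} T\big(k,\sigma(k)\big),
\end{equation*}
using \eqref{Eq Probabiity Measure} applied to $\Omega = \itS_{K}$ (so $|\itS_{K}| = K!$ and each permutation has probability $1/K!$). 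Then I would swap the two (finite) sums to obtain $\frac{1}{K!}\sum_{k \in [K]} \sum_{\sigma \in \itS_{K}} T\big(k,\sigma(k)\big)$.

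The key combinatorial step is to evaluate, for each fixed $k \in [K]$, the inner sum $\sum_{\sigma \in \itS_{K}} T\big(k,\sigma(k)\big)$. For a fixed $k$ and a fixed target value $i \in [K]$, the number of permutations $\sigma \in \itS_{K}$ with $\sigma(k) = i$ is exactly $(K-1)!$, since the remaining $K-1$ positions can be filled by any bijection from $[K]\setminus\{k\}$ onto $[K]\setminus\{i\}$. Hence, grouping the permutations according to the value $i = \sigma(k)$,
\begin{equation*}
\sum_{\sigma \in \itS_{K}} T\big(k,\sigma(k)\big) = \sum_{i \in [K]} (K-1)!\, T(k,i) = (K-1)!\sum_{i \in [K]} T(k,i).
\end{equation*}
Substituting this back and summing over $k$ gives $\Exp(\X_{\ia}) = \frac{(K-1)!}{K!}\sum_{k\in[K]}\sum_{i\in[K]} T(k,i) = \frac{1}{K}\sum_{k,i\in[K]} T(k,i)$, which is exactly \eqref{Eq expected performance molecule computation} once we recognize the double sum as $\msum(T)$ by its definition.

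I do not anticipate a genuine obstacle here; the argument is elementary once the summation is set up. The one point requiring a little care is the counting claim that $|\{\sigma \in \itS_{K}: \sigma(k) = i\}| = (K-1)!$ for every pair $(k,i)$ — this is where the structure of $\itS_{K}$ enters — and making sure the interchange of the two finite sums is justified (it is, trivially, since both index sets are finite). An alternative, perhaps slicker, route would be to invoke linearity of expectation: $\Exp(\X_{\ia}) = \sum_{k\in[K]} \Exp\big(T(k,\sigma(k))\big)$, and then observe that for a uniformly random permutation $\sigma$, the value $\sigma(k)$ is uniformly distributed on $[K]$, so $\Exp\big(T(k,\sigma(k))\big) = \frac{1}{K}\sum_{i\in[K]} T(k,i)$; summing over $k$ yields the result. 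I would present the direct double-counting version as the main proof since it is self-contained and matches the elementary probabilistic framework set up in \eqref{Eq Probabiity Measure}.
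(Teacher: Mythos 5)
Your proposal is correct and follows essentially the same route as the paper's proof: averaging over $\itS_{K}$, interchanging the finite sums, partitioning the permutations according to the value $i=\sigma(k)$, and using the count $\vert\{\sigma:\sigma(k)=i\}\vert=(K-1)!$. No gaps; the linearity-of-expectation alternative you mention is a valid shortcut but the paper uses the direct double-counting version you present as the main argument.
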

\begin{proof}
Consider the following calculation
\begin{equation*}
\begin{split}
\Exp\big(\X_{ \ia }\big) 
& = \frac{1}{K!}\sum\limits_{\sigma\,\in \, \itS_{K}} \X_{ \ia }(\sigma) 
= 
\frac{1}{K!}\sum\limits_{\sigma\,\in \, \itS_{K} } \sum\limits_{k \, \in\, [K]} T\big(k,\sigma(k)\big) 
= \frac{1}{K!} \sum\limits_{k \, \in\, [K]} \sum\limits_{\sigma\,\in \, \itS_{K}} T\big(k,\sigma(k)\big) 
\\
& = \frac{1}{K!} \sum\limits_{k \, \in\, [K]} \sum\limits_{i \, \in\, [K]}
\sum\limits_{\substack{\sigma\,\in \, \itS_{K} \\ \sigma(k) \, = \, i } } T\big(k,\sigma(k)\big) 
= \frac{1}{K!} \sum\limits_{k \, \in\, [K]} \sum\limits_{i \, \in\, [K]}
T\big(k, i\big)\sum\limits_{\substack{\sigma\,\in \, \itS_{K} \\ \sigma(k) \, = \, i } }  1
= \frac{(K - 1)!}{K!} \sum\limits_{k \, \in\, [K]} \sum\limits_{i \, \in\, [K]}
T\big(k, i\big) .
\end{split}
\end{equation*}
From here, Equation \eqref{Eq expected performance molecule computation} follows trivially.
\end{proof}
\begin{remark}\label{Rem expected performance molecule computation}
Notice that in Proposition \ref{Th expected performance molecule computation} the following holds
\begin{enumerate}[(i)]
\item It is understood that the probabilistic space is $ \Omega \equiv \itS_{K} $ where all the outcomes are equally likely.

\item Assume that the setting of a semester is given, namely: number of sections with capacities, conformation of sections and a set of instructors teaching the course. Then, $ T = C_{\apv} $, $ J $ is the number of sections and $ \Exp(X_{ \ia }) $ is the expected performance, when assigning instructors randomly to the available sections with defined students i.e., the $ \ia $ method.

\end{enumerate}
\end{remark}
Our next step is to be able to compute the expected performance of a group when assigning students randomly to available sections with defined instructors. This task is far more complicated due to the richness of the search space. We begin introducing some notation.
\begin{definition}\label{Def labeling and asignation functions}
Let $ N, L, J \in \N $ , $ \p = (p_{1}, \ldots, p_{L}) \in \N^{L}, \g = (g_{1}, \ldots, g_{J}) \in \N^{J} $ be as in Definition \ref{Def Characteristic Numbers}. 
\begin{enumerate}[(i)]
\item Let $ c: [N] \rightarrow [L] $ be the \textbf{classification function} of each student i.e., for each student $ n \in [N] $ it assigns the label $ c(n) \in [L] $ describing the profile to which he/she belongs to. 

\item Define the \textbf{student assignation} probabilistic space by
\begin{equation}\label{Eq student assignation probabilitic space}
\Omega\defining \Big\{\omega: [N] \rightarrow [J]: 
\big\vert \omega^{-1}(j) \big\vert = g_{j} \, , \, \text{for all } j \in [J]\Big\} .
\end{equation}

\item For a fixed element $ \omega \in \Omega $, define the matrix $ G^{\omega} \in \R^{L\times J} $ whose entries are given by 
\begin{align*}
& G^{\omega}(\ell, j) =  
\big\vert \big\{n\in [N]: c(n) = \ell, \omega(n) = j  \big\}\big\vert = 
\big\vert c^{-1}(\ell) \cap \omega^{-1}(j)\big\vert\, ,
&
& \forall\, \ell \in [L], j\in [J] .
\end{align*}
\end{enumerate}
\end{definition}
\begin{remark}\label{Rem Student Assignment Setting}
In Definition \ref{Def labeling and asignation functions} notice the following
\begin{enumerate}[(i)]
\item The student classification function satisfies that $ p_{\ell} \defining \big \vert c^{-1}(\ell) \big\vert $ for all $ \ell \in [L] $.

\item An element $ \omega $ of the student assignation space is such that every individual is assigned to a section and every section is full (recall the sum condition \eqref{Eq Sum Condition}). 

\item In our study, a list of $ N $ enrolled students is completely characterized a classification function $ c: [N] \rightarrow [L] $ and a section assignment function $ \omega \in \Omega $
\begin{equation*}
\begin{array}{cccc}
1, & 2, &   \ldots, & N , \\
c(1), & c(2), &  \ldots, & c(N) ,\\
\omega(1), & \omega(2), &  \ldots , & \omega(N) .
\end{array}
\end{equation*}
The first row represents identity and the second indicates profile classification. Therefore, only the third row is subject to decision or randomization as it is done in this model.

\item For every $ \omega \in \Omega $ the matrix $ G^{\omega} $ is clearly a group assignment matrix as introduced in Definition \ref{Def Performance and Group Assignment Matrices}.
\end{enumerate}
\end{remark}
%
%
\begin{proposition}\label{Th measuring global performance}
Let $ N, L, J \in \N $ , $ \p = (p_{1}, \ldots, p_{L}) \in \N^{L}, \g = (g_{1}, \ldots, g_{J}) \in \N^{J} $ be as in Definition \ref{Def Characteristic Numbers}. 
Then
\begin{align}\label{Eq measuring global performance}
& \trace\big(T_{\apv} G^{\omega} \big) = 
\sum\limits_{j\, = \, 1}^{J} \big(T_{\apv} G^{\omega} \big)\big(j, j \big) =
\sum_{n\, \in\,[N]} T_{\apv}\big( \omega(n), c(n) \big), 
& 
& \text{for all } \omega \in \Omega .
\end{align} 
\end{proposition}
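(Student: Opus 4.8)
The plan is to expand both sides of \eqref{Eq measuring global performance} into sums over indices and match them termwise. First I would unwind the matrix product: by definition of $T_{\apv}G^{\omega}$, the $(j,j)$ diagonal entry equals $\sum_{\ell\in[L]} T_{\apv}(j,\ell)\,G^{\omega}(\ell,j)$, so that
\begin{equation*}
\trace\big(T_{\apv} G^{\omega}\big) = \sum_{j\in[J]}\sum_{\ell\in[L]} T_{\apv}(j,\ell)\,G^{\omega}(\ell,j).
\end{equation*}
Next I would substitute the combinatorial description of $G^{\omega}$ from Definition \ref{Def labeling and asignation functions}, namely $G^{\omega}(\ell,j) = \big\vert c^{-1}(\ell)\cap\omega^{-1}(j)\big\vert = \sum_{n\in[N]}\ind\big[c(n)=\ell,\ \omega(n)=j\big]$. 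Plugging this in and interchanging the finite sums gives
\begin{equation*}
\trace\big(T_{\apv} G^{\omega}\big) = \sum_{n\in[N]}\sum_{j\in[J]}\sum_{\ell\in[L]} T_{\apv}(j,\ell)\,\ind\big[c(n)=\ell,\ \omega(n)=j\big].
\end{equation*}

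For each fixed $n$, the indicator is nonzero for exactly one pair $(\ell,j)$, namely $\ell=c(n)$ and $j=\omega(n)$ — here I would invoke that $\omega$ is a genuine function $[N]\to[J]$ and $c$ a function $[N]\to[L]$, so each student has a single profile and a single section. Hence the inner double sum collapses to the single term $T_{\apv}\big(\omega(n),c(n)\big)$, and summing over $n$ yields the right-hand side of \eqref{Eq measuring global performance}. The middle equality, $\trace(T_{\apv}G^{\omega}) = \sum_{j}(T_{\apv}G^{\omega})(j,j)$, is just the definition of the trace of the square matrix $T_{\apv}G^{\omega}\in\R^{J\times J}$, so nothing is needed there beyond observing that the product is indeed $J\times J$ (since $T_{\apv}\in\R^{J\times L}$ and $G^{\omega}\in\R^{L\times J}$).

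This argument is entirely bookkeeping, so there is no real obstacle; the only points deserving care are (i) getting the index order right in the matrix product $T_{\apv}G^{\omega}$ versus the convention used for $T_{\apv}$, whose entries $T_{\apv}(j,\ell)$ were indexed as (instructor $j$, segment $\ell$) in Definition \ref{Def Performance and Group Assignment Matrices}, and (ii) justifying the interchange of summation order, which is immediate because all sums are finite. One could alternatively phrase the proof by grouping the $n$-sum according to the profile $c(n)$ and the section $\omega(n)$, recovering the counts $G^{\omega}(\ell,j)$ directly; I would mention this as a one-line remark but carry out the indicator-expansion version above as the main line, since it makes the single-valuedness of $c$ and $\omega$ the visibly load-bearing fact.
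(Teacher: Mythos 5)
Your proposal is correct and is essentially the paper's own argument read in the opposite direction: the paper starts from $\sum_{n}T_{\apv}(\omega(n),c(n))$, partitions the $n$-sum by the fibers of $(c,\omega)$ to recover the counts $G^{\omega}(\ell,j)$, and reassembles the matrix product, while you expand the trace and collapse an indicator sum — the same termwise bookkeeping via $G^{\omega}(\ell,j)=\vert c^{-1}(\ell)\cap\omega^{-1}(j)\vert$. The "alternative phrasing" you mention at the end is precisely the proof given in the paper.
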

\begin{proof}
Consider the following identities
\begin{equation*}
\begin{split}
\sum_{n\, \in\,[N]} T_{APV}\big( \omega(n), c(n) \big) & =
\sum_{(\ell, j)\, \in \,[L]\times[J]}
\sum_{\substack{n\, =\, 1\\ c(n) \, = \, \ell, \, \omega(n) \, = \, j}}^{N} T_{APV} 
\big(\omega(n), c(n)\big) \\
& = \sum_{j\, \in [J]} \sum_{\ell \, \in \,[L]} 
T_{APV} \big( j, \ell \big)
\sum_{\substack{n\, =\, 1\\ c(n) \, = \, \ell, \, \omega(n) \, = \, j}}^{N} 1 \\
& = \sum_{j\, \in [J]} \sum_{\ell \, \in \,[L]} 
T_{APV} \big( j, \ell \big)
G^{\omega}(\ell, j) 
\\
& = \sum_{j\, \in [J]} \big(T_{APV} G^{\omega}\big)  \big( j, j \big) ,
\end{split}
\end{equation*}
i.e., the result holds.
\end{proof}
\begin{remark}\label{Rem Probabilistic Modeling Student Assignation}
Observe that if it is assumed that that the instructor $t_{j}$ is assigned to the section $ j $ for all $ j \in {J}$, i.e., the instructor assignment function $ \pi \in \itS_{J} $ of Problem \ref{Pblm Students Assignment Problem}
is the identity then, the previous result states that 
\begin{equation}\label{Eq measuring global performance comment}
\trace\big(T_{\apv} G^{\omega} \big) = 
\sum\limits_{j\, = \, 1}^{J} \big(T_{\apv} G^{\omega} \big)\big(j, \pi(j) \big) =
\sum_{n\, \in\,[N]} T_{\apv}\big( \omega(n), c(n) \big), 
\end{equation} 
for each $ \omega \in \Omega $. Since the expression of the middle measures the global performance of the group, so does the right hand side. 
Hence, it makes sense to declare the left hand side in the expression above as a random variable.


\end{remark}
\begin{definition}\label{Def Random assignment matrix}
	Let $ N, L, J \in \N $ , $ \p = (p_{1}, \ldots, p_{L}) \in \N^{L}, \g = (g_{1}, \ldots, g_{J}) \in \N^{J} $ be as in Definition \ref{Def Characteristic Numbers} and let $ T \in \R^{ J\times L } = \big( T(j, \ell) :  k \in [J] , \ell \in [L] \big) $, be a fixed matrix.
	Define the \textbf{student assignment performance} random variable
	\begin{align}\label{Eq Student Assignment Performance}
	& \X_{ \sa }: \Omega \rightarrow \R\, ,  & 
	& \X_{ \sa }(\omega) \defining \sum\limits_{n\,\in\, [N] } T\big(\omega(n), c(n)\big) \, .
	\end{align}
\end{definition}
Before computing the expectation of the random variable $ \X_{\sa} $ some previous results from combinatorics are needed.
\begin{remark}[Definition \ref{Def Random assignment matrix}]
	The performance matrix $T$ in Definition \ref{Def Random assignment matrix} provides a measure for the performance of each instructor within each element of a given segmentation, as previously stated in Definition \ref{Def Performance and Group Assignment Matrices}. One possible performance matrix is given by the expected performance matrix, output of the Algorithm \ref{Alg Instructors' performance} in Section \ref{Sec Computation Lecturer Performance}. An ideal performance matrix $T$ should include more specific information about instructors as mentioned in Analysis \ref{measurament_of_instructor} (e.g., peer ratings, self-evaluation, alumni ratings, teaching awards and others).
\end{remark} 
\begin{lemma}\label{Th cardinal of student assignment space}
	\begin{enumerate}[(i)]
		\item The cardinal of the student assignment space is given by
		\begin{equation}\label{Eq cardinal of student assignment space}
		\vert \Omega \vert = N!\prod\limits_{j \, = \, 1}^{J}\dfrac{1}{g_{j}!} .
		\end{equation} 
		\item Let $ n\in [N] $, $ j\in [J] $ be fixed, and define the set
		\begin{equation*}
		\Omega_{n, j}  \defining \big\{ \omega\in \Omega: \omega (n) = j\big\}.
		\end{equation*}
		Then $ \vert \Omega_{n, j} \vert = \dfrac{(N - 1)! }{(g_{j} - 1)!} 
		\prod\limits_{\substack {  i \, \in \, [K]\\
				i \, \neq \, j} }\dfrac{1}{g_{i}!} $
	\end{enumerate}
\end{lemma}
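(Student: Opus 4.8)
The plan is to treat both parts as straightforward exercises in counting surjections-with-prescribed-fiber-sizes (equivalently, ordered set partitions into blocks of fixed sizes). For part (i), I would count the elements of $\Omega$ by building an element $\omega:[N]\to[J]$ with $|\omega^{-1}(j)| = g_j$ as a choice of an ordered partition of the $N$ students into the sections. Concretely, choose which $g_1$ students go to section $1$ (there are $\binom{N}{g_1}$ ways), then which $g_2$ of the remaining go to section $2$, and so on; the product telescopes to the multinomial coefficient $\binom{N}{g_1, g_2, \ldots, g_J} = \dfrac{N!}{g_1!\, g_2!\cdots g_J!} = N!\prod_{j=1}^{J}\frac{1}{g_j!}$, which is exactly \eqref{Eq cardinal of student assignment space}. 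I would note the sum condition $\sum_{j} g_j = N$ from \eqref{Eq Sum Condition} is what makes this a genuine partition (no leftover students).

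For part (ii), I would fix $n$ and $j$ and set up a bijection between $\Omega_{n,j}$ and the set of admissible assignments of the remaining $N-1$ students into sections with capacities $g_1, \ldots, g_{j-1}, g_j - 1, g_{j+1}, \ldots, g_J$: having committed student $n$ to section $j$, one slot of section $j$ is used up, and the constraints on the other students are precisely those of a smaller instance of the same counting problem. Applying part (i) to this reduced instance (with $N-1$ students, capacity $g_j$ lowered to $g_j-1$, all others unchanged) gives
\begin{equation*}
|\Omega_{n,j}| = (N-1)!\cdot \frac{1}{(g_j - 1)!}\prod_{\substack{i\in[J]\\ i\neq j}}\frac{1}{g_i!},
\end{equation*}
which matches the claimed formula (modulo the index set $[K]$ in the statement, which should read $[J]$). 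I would double-check the reduced capacities still sum correctly: $(g_j - 1) + \sum_{i\neq j} g_i = N - 1$, so part (i) applies verbatim.

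There is essentially no hard step here — the only thing requiring a moment of care is making the reduction in part (ii) precise as a bijection rather than hand-waving, i.e. verifying that restricting $\omega\in\Omega_{n,j}$ to $[N]\setminus\{n\}$ lands in the reduced assignment space and that this restriction is invertible (extend any reduced assignment by sending $n$ to $j$). Once that bijection is stated, part (ii) is immediate from part (i). I would present part (i) with the telescoping-binomial computation and part (ii) with the one-line reduction, keeping both short.
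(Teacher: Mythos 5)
Your proposal is correct and follows essentially the same route as the paper: part (i) is the standard count of ordered set partitions with prescribed block sizes, and part (ii) reduces $\Omega_{n,j}$ to a smaller instance of the same problem via a bijection with the assignment space on $N-1$ students with $g_j$ lowered to $g_j-1$. The only cosmetic differences are that you derive the formula in (i) by telescoping binomial coefficients where the paper cites the multiset-permutation theorem from \cite{BonaWalk}, and in (ii) you restrict directly to $[N]\setminus\{n\}$ for arbitrary $n$, whereas the paper first treats $n=N$ and then transports the count to general $n$ by composing with a transposition; you also correctly flag that the index set $[K]$ in the statement should read $[J]$.
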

\begin{proof}
	\begin{enumerate}[(i)]
		\item Let $ \omega $ be an element of $ \Omega $ and write it in the extended way i,e,
		\begin{equation*}
		\begin{array}{cccc}
				\omega(1), & \omega(2), & \ldots , & \omega(N), \\
			1 , & 2 ,  &  \ldots, & N .
		\end{array}
		\end{equation*}
		Clearly, $ \omega $ is a permutation of of the multiset
		\begin{equation}\label{Eq assignment section multiset}
		\big\{\underbrace{1,1,\ldots, 1}_{g_{1} \text{-times}},
		\underbrace{2,2,\ldots, 2}_{g_{2} \text{-times}},
		\ldots
		,\underbrace{J,J,\ldots, J}_{g_{J} \text{-times}}\big\} =
		\big\{1\cdot g_{1}, 2\cdot g_{2}, \ldots J\cdot g_{J} \big\}.
		\end{equation}
		%
	From elementary combinatorics, it is known that the number of permutations of the multiset \eqref{Eq assignment section multiset} is given given by the expression \eqref{Eq cardinal of student assignment space}, see Theorem 3.5 in \cite{BonaWalk}.
	
	\item First we analyze the case of the set $ \Omega_{N, j} $. Recalling the expression \eqref{Eq student assignation probabilitic space}, we can write $ \Omega_{N, j} = \big\{\omega: [N] \rightarrow [J]:  \, \omega(N) = j , \, 
	\vert \omega^{-1}(i) \vert = g_{i}, \text{ for all } i \in [J]\big\} $. It is direct to see that there is a bijection with the set $ \widetilde{\Omega} \defining \big\{\omega: [N - 1] \rightarrow [J]: 
	\big\vert \omega^{-1}(i) \big\vert = \widetilde{g}_{i} \, , \, \text{for all } i \in [J]\big\} $ where $ \widetilde{g}_{i} $ is defined as follows
	\begin{equation*}
	\widetilde{g}_{i} \defining 
	\begin{cases}
	g_{i}, & i \neq j ,  \\
	g_{i} - 1,  & j = i .
	\end{cases}
	\end{equation*}
	Applying the previous part on the set $ \widetilde{\Omega} $, it follows that $ \Omega_{N, j} $ satisfies the result. For the general case $ \Omega_{n,j} $, take the permutation $ \sigma \in \itS_{N} $ defined by 
	\begin{equation*}
	\sigma(k) \defining
	\begin{cases}
	N, & k = n , \\
	n, & k = N , \\
	k & \text{otherwise}. 
	\end{cases} 
	\end{equation*}
	Observe that the map $ \varphi: \Omega_{n,j} \rightarrow \Omega_{N,j} $ defined by $ \varphi(\omega) \defining \omega\circ\sigma $ is clearly a bijection. Consequently, $ \vert \Omega_{n,j} \vert = \vert \Omega_{N,j} \vert $ and the proof is complete.
	\end{enumerate}	
\end{proof}
\begin{theorem}\label{Def Expectation of random student assignment}
	The expectation of the random variable $ \X_{ \sa } $ is given by 
	\begin{equation}\label{Eq Expectation of random student assignment}
	\Exp\big(\X_{ \sa } \big) = \frac{1}{N} \,  \g^{t} T \, \p .
	\end{equation}
\end{theorem}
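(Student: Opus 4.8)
The plan is to compute $\Exp(\X_{\sa})$ directly from the definition by averaging over $\Omega$, exploiting the linearity of expectation together with the cardinality formulas established in Lemma~\ref{Th cardinal of student assignment space}. First I would write
$$
\Exp(\X_{\sa}) = \frac{1}{\vert \Omega \vert}\sum_{\omega \in \Omega} \X_{\sa}(\omega) = \frac{1}{\vert \Omega \vert}\sum_{\omega \in \Omega}\sum_{n \in [N]} T\big(\omega(n), c(n)\big),
$$
and then swap the order of summation so that the outer sum runs over $n \in [N]$. For a fixed $n$, I would partition the inner sum over $\omega$ according to the value $j = \omega(n) \in [J]$, which gives
$$
\sum_{\omega \in \Omega} T\big(\omega(n), c(n)\big) = \sum_{j \in [J]} T\big(j, c(n)\big)\,\big\vert \Omega_{n,j}\big\vert .
$$
The key input here is that $\vert \Omega_{n,j}\vert$ does not depend on $n$, which is exactly part (ii) of Lemma~\ref{Th cardinal of student assignment space}.

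Next I would plug in the two cardinalities from the Lemma. Since $\vert \Omega_{n,j}\vert = \dfrac{(N-1)!}{(g_j-1)!}\prod_{i \neq j}\dfrac{1}{g_i!}$ and $\vert \Omega \vert = N!\prod_{i=1}^{J}\dfrac{1}{g_i!}$, the ratio simplifies: all the factors $\prod_{i}\frac{1}{g_i!}$ cancel except the adjustment at index $j$, leaving
$$
\frac{\vert \Omega_{n,j}\vert}{\vert \Omega \vert} = \frac{(N-1)!}{N!}\cdot\frac{g_j!}{(g_j-1)!} = \frac{g_j}{N}.
$$
Therefore
$$
\Exp(\X_{\sa}) = \frac{1}{N}\sum_{n \in [N]}\sum_{j \in [J]} g_j\, T\big(j, c(n)\big).
$$

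Finally I would collect the sum over $n$ by grouping students according to their profile label. Since $c(n) = \ell$ happens for exactly $p_\ell = \vert c^{-1}(\ell)\vert$ values of $n$ (Remark~\ref{Rem Student Assignment Setting}(i)), rearranging gives
$$
\Exp(\X_{\sa}) = \frac{1}{N}\sum_{j \in [J]}\sum_{\ell \in [L]} g_j\, T(j,\ell)\, p_\ell = \frac{1}{N}\,\g^{t} T\,\p,
$$
which is the claimed identity \eqref{Eq Expectation of random student assignment}. I do not anticipate a genuine obstacle here; the only point requiring care is the bookkeeping in the cancellation $\vert \Omega_{n,j}\vert / \vert \Omega \vert = g_j/N$, and the observation that this ratio is independent of $n$ (so that the outer sum over $n$ contributes only through the multiplicities $p_\ell$) — both of which are already packaged in Lemma~\ref{Th cardinal of student assignment space} and Remark~\ref{Rem Student Assignment Setting}. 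This mirrors, on the larger space $\Omega$, the argument used for $\X_{\ia}$ in Theorem~\ref{Th expected performance molecule computation}.
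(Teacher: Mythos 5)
Your proposal is correct and follows essentially the same route as the paper's proof: interchange the sums, partition over the value $j=\omega(n)$, invoke Lemma~\ref{Th cardinal of student assignment space}(ii) to reduce the ratio $\vert\Omega_{n,j}\vert/\vert\Omega\vert$ to $g_j/N$, and then collect the sum over $n$ by profile multiplicities $p_\ell$. No substantive differences.
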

\begin{proof}
	By definition
	\begin{equation}
	\begin{split}
	\vert \Omega \vert\, \Exp\big(\X_{ \sa }\big)  
	& = 
	\sum\limits_{\omega\,\in\,\Omega} \sum\limits_{n\,= \,1 }^{N}
	T\big(\omega(n), c(n)\big)
	= \sum\limits_{n\,= \,1 }^{N}
	\sum\limits_{\omega\,\in\,\Omega}
	T\big(\omega(n), c(n)\big)\\
	& = \sum\limits_{n\,= \,1 }^{N}
	\sum\limits_{j\,= \,1 }^{J}
	\sum\limits_{\substack{\omega\,\in\,\Omega\\\omega(n) \, = \, j}}
	T\big(\omega(n), c(n)\big) 
	= \sum\limits_{n\,= \,1 }^{N}
	\sum\limits_{j\,= \,1 }^{J}
	T\big( j, c(n) \big)\sum\limits_{\substack{\omega\,\in\,\Omega\\\omega(n) \, = \, j}} 1 
	\end{split}
	\end{equation}
	Recalling Lemma \ref{Th cardinal of student assignment space} (ii), it follows that
	\begin{equation}
	\begin{split}
	\Exp\big(\X_{ \sa }\big)  
	& = \frac{1}{\vert \Omega \vert} 
	\sum\limits_{n\,= \,1 }^{N}
	\sum\limits_{j\,= \,1 }^{J}
	T\big( j, c(n) \big)
	\dfrac{(N - 1)! }{(g_{j} - 1)!} 
	\prod\limits_{\substack {  i \, \in \, [J]\\
			i \, \neq \, j} }\dfrac{1}{g_{i}!} \\
	& = \frac{1}{\vert \Omega \vert}\sum\limits_{n\,= \,1 }^{N}
	\sum\limits_{j\,= \,1 }^{J}
	g_{j} \, T\big( j, c(n) \big) (N - 1)! 
	\prod\limits_{  i \, \in \, [J] } \dfrac{1}{g_{i}!} = 
	\frac{1}{N}
	\sum\limits_{j\,= \,1 }^{J}
	\sum\limits_{n\,= \,1 }^{N}
	g_{j} \, T\big( j, c(n) \big) \\
	& = \frac{1}{N}
	\sum\limits_{j\,= \,1 }^{J}
	g_{j}
	\sum_{\ell\, = \, 1}^{L}
	\sum\limits_{\substack { n \, \in \, [N]\\
			c(n) \, = \, \ell} }
	\, T\big( j, c(n) \big)
	= \frac{1}{N}
	\sum\limits_{j\,= \,1 }^{J}
	g_{j}
	\sum_{\ell\, = \, 1}^{L}
	T\big(j, \ell\big) p_{\ell} \\
	& = \frac{1}{N}
	\sum_{\ell\, = \, 1}^{L}
	p_{\ell}
	\sum\limits_{j\,= \,1 }^{J}
	T\big(j, \ell \big) g_{j} .
	\end{split}
	\end{equation}
	Here, the second equality uses the identity $ \frac{1}{(g_{j} - 1)!} = \frac{ g_{j}}{g_{j}!} $ and the third uses the expression \eqref{Eq cardinal of student assignment space}, together with an obvious exchange of indexes. The fourth equality is a convenient association of summands, while the fifth merely uses the fact $ \vert c^{-1}(\ell) \vert = p_{\ell} $. From here, the result follows trivially.
\end{proof}
\begin{remark}\label{Rem random student assignment tool}
Let $ \pi \in \itS_{J} $ be a permutation and let $ A^{\pi} $ its associated permutation matrix
\begin{equation*}
A^{\pi} = \big[\eversor_{\pi(1)}, \eversor_{\pi(2)}, \ldots, \eversor_{\pi(J)} \big] ,
\end{equation*}
where $ \big(\eversor_{j}: j\in [J]\big) $ is the canonical basis of $ \R^{J} $. Then, if the instructors $ \big\{ t_{j}: j\in [J] \big\}$ are assigned to their corresponding sections by a permutation $ \pi \in \itS_{J} $, other than the identity, by taking
\begin{equation*}
T \defining T_{\apv} \, A^{\pi},
\end{equation*}
the random variable $ \X_{\sa}(\omega) $ (as defined in \eqref{Eq Student Assignment Performance}), computes the global performance of the group for each $ \omega \in \Omega $ (as discussed in Remark \ref{Rem Probabilistic Modeling Student Assignation}). Therefore, without loss of generality, it can be assumed that $ \pi  \in \itS_{J} $ is the identity.
\end{remark}
Finally we define 
\begin{definition}\label{Def Normalization Methods Optimization Random Version}
	The random version of Algorithm \ref{Alg Optimization Algoritm} will have two methods. 
	\begin{enumerate}[(i)]
		\item The Random Normalization method introduced in \textsc{Definition} \ref{Def Normalization Methods} \ref{Mthd Random Normalization} defined in Equation \ref{Eq Performance Rates}. However, it is important to observe that this time $ v_{\mt} $, $ \rho_{\mt} $ and $ \X_{\mt} \defining \sum_{j\, = \, 1}^{J} \big(T_{APV} G_{h}\big) \big(j, \pi_{h}(j) \big) $ are all random variables. 
		
		\item Second, the Expected Normalization method introduced in \textsc{Definition} \ref{Def Normalization Methods} \ref{Mthd Expected Normalization}, which is computed using 
		\begin{align}\label{Eq Performance Rates Normalized}
		& \gamma_{\mt} \defining 
		100\frac{v_{\mt} - \Exp\big(\X_{\mt}\big) }{ \Exp\big(\X_{\mt}\big) }
		\, , &
		& \mt \in \{ \ia, \sa\} .
		\end{align} 
		Here, $ \Exp\big(\X_{\mt}\big) $ is given by Theorem \ref{Th expected performance molecule computation} if $ \mt = \ia $ and by Theorem \ref{Def Expectation of random student assignment} if $ \mt = \sa $. Again, $ v_{\mt} $ and $ \gamma_{\mt} $ are both random variables. 
	\end{enumerate} 
\end{definition}
\begin{remark}\label{Rem Harmonic Mean Comments}
\begin{enumerate}[(i)]
\item It is understood that, for the application of the Law of Large Numbers \ref{Th the Law of Large Numbers} in the numerical experiments, the random variables above will be considered as sequences of independent, identically distributed, variables i.e., 
 $ \big(v_{\mt}^{(n)}: n\in \N \big) $, $ \big(\X_{\mt}^{(n)}: n\in \N\big) $, $ \big(\rho_{\mt}^{(n)}: n\in \N \big) $ and $ \big(\gamma_{\mt}^{(n)}: n\in \N \big) $; where the index $ n $ indicates an iteration of the Monte Carlo simulation. 
 
\item It is direct to see that $ \big(\gamma_{\mt}^{(n)}: n\in \N\big) $ converges in the Ces\`aro sense to $ \Exp\big(v_{\mt}^{(1)}\big) \big(\Exp\big(\X_{\mt}^{(n)}\big)  \big) ^{-1} - 1 $. 

\item Define $ \Z_{\mt}^{(n)} \defining \dfrac{1}{\X_{\mt}^{(n)} }$, since  $ \big(v_{\mt}^{(n)}: n\in \N \big) $ and $ \big(\X_{\mt}^{(n)}: n\in \N\big) $ are independent, it holds that 
\begin{equation}\label{Eq Harmonic Mean Comments}
\rho_{\mt}^{(n)} = \frac{v_{\mt}^{(n)} - \X_{\mt}^{(n)} }{  \X_{\mt}^{(n)} } 
= \frac{v_{\mt}^{(n)}  }{  \X_{\mt}^{(n)} } - 1
= v_{\mt}^{(n)}\Z_{\mt}^{(n)} - 1 
\xrightarrow[n\,\rightarrow \,\infty]{\text{Ces\`aro}} 
\Exp\big(v_{\mt}^{(1)}\big) \Exp\big(\Z_{\mt}^{(n)} \big) - 1
= \Exp\big(v_{\mt}^{(1)}\big) \Exp\Big(\frac{1}{\X_{\mt}^{(n)} } \Big) - 1.
\end{equation}
The right hand side of the expression above involves the reciprocal of the harmonic mean of the variable $ \big(\X_{\mt}^{(n)}: n\in \N\big) $. Clearly, $ \big(\gamma_{\mt}^{(n)}: n\in \N\big) $ and $ \big(\rho_{\mt}^{(n)}: n\in \N\big) $ converge (in the C\`esaro sense) to different limits. Unfortunately, the harmonic mean has no simple expression equivalent to that of Equation \eqref{Eq Expectation of random student assignment} for the arithmetic mean. Consequently, it can be handled only numerically; this will be done in the next section.  
\end{enumerate}
\end{remark}
%
%
%
%
\subsection{The Monte Carlo Simulation Algorithm and Numerical Results}\label{Sec Numerical Simulations}
The randomization of the variables as well as its normalization discussed in the sections \ref{Sec Randomization of variables} and \ref{Sec Normalization of the method} respectively are summarized in the pseudocode \ref{Alg Monte Carlo Analytica Omega} below. A particular example of the Monte Carlo simulation results is depicted in \textsc{Figure}  \ref{Fig Asymptotic Enhancement}, while the corresponding body/composition of enrolled students displayed presented in \textsc{Table} \ref{Tb Sections Realization Example}. 

The results of several simulations for the Differential Calculus course are summarized in \textsc{Table} \ref{Tb Monte Carlo Experiments Summary}. Out several experiments, it is observed that a reasonable level of convergence of the Ces\`aro means is attained above 800 iterations. Given that we are simulating the behavior of a highly complex random process, it is clear that no convergence rate can actually be concluded, the threshold for which the Ces\`aro mean stabilizes shifts significantly from one experiment to the other. This is because every experiment defines a number of sections $ \upns $, an enrollment body/composition of students as in \textsc{Table} \ref{Tb Sections Realization Example}, a group matrix assignment $ G $ and a number of tenured lecturers $ \upnt $, from here, the iteration process begins as it is shown in \textsc{Algorithm} \ref{Alg Monte Carlo Analytica Omega}. Therefore, the starting triple $ ( \upns, G, \upnt ) $ changes substantially between simulations as it can be seen in \textsc{Table} \ref{Tb Monte Carlo Experiments Summary}. These changes become even more dramatical when shifting from one course to another, as it is the case of \textsc{Table} \ref{Tb Monte Carlo Experiments All Courses Summary}, reporting the algorithm's performance for all the remaining seven service courses.

It is also important to observe the difference between Random (\textsc{Definition} \ref{Def Normalization Methods} \ref{Mthd Random Normalization}) vs. Expected (\textsc{Definition} \ref{Def Normalization Methods} \ref{Mthd Expected Normalization}) normalization methods. It is not significant in the simulation of the method's performance (see Figure \ref{Fig Asymptotic Enhancement} (a) and (b)) and  it is negligible in the behavior of their corresponding Ces\`aro means, i.e., regardless of the chosen normalization method (see figure Figure \ref{Fig Asymptotic Enhancement} (c) and (d)), the asymptotic behavior difference is negligible at least, from the numerical point of view. The latter can be also observed on the Tables  \ref{Tb Monte Carlo Experiments Summary} and \ref{Tb Monte Carlo Experiments All Courses Summary}.
\begin{remark}[Figure \ref{Fig Asymptotic Enhancement}]
	Figure \ref{Fig Asymptotic Enhancement} depicts enhancement (and Ces\`aro means enhancement) results for the variable Pass Rate of the Monte Carlo Simulation, for both: the Instructor and the Student Assignment Methods. It confirms the result from Subsection \ref{Sec Historical Assessment}, the Students Assignment Method ($ \sa $) yields better results than the Instructor Assignment Method ($ \ia $), and it shows that this result is not merely a particularity from our data set. Instead, it constitutes a robust one. 
\end{remark} 
\begin{table}[h!]
\def\arraystretch{1.4}
\small{
\begin{center}
\rowcolors{2}{gray!25}{white}
\begin{tabular}{p{1.5cm}|p{2.0cm}p{2.0cm}p{2.0cm}p{2.0cm}p{1.2cm}p{1.2cm}p{1.2cm}  }
    \hline
    \rowcolor{gray!80}
Experiment
&
\multicolumn{2}{c}{Random Normalization, $ \rho_{\mt} $ } &
\multicolumn{2}{c}{Expected Normalization, $ \gamma_{\mt} $ } 
& Enrollment 
& Sections
& Lecturers\\
\rowcolor{gray!80}
Number
& $ 100 \times \dfrac{v_{\ia} }{ \rho_{\ia} } $
& $ 100 \times \dfrac{v_{\sa} }{ \rho_{\sa} } $
& $ 100 \times \dfrac{v_{\ia} }{ \gamma_{\ia} } $
& $ 100 \times \dfrac{v_{\sa} }{ \gamma_{\sa} } $
& $ \upns $ 
& $ \sum\limits_{K\,\in\, \mathcal{I}} s_{K} $
& $ \upnt $
 \\[5pt]
\hline
1 &
0.3097 &	2.9059 &	0.3056 &	2.9044 &	1355	 &15	 & 6 \\
2 &
0.4595 &	3.2880 & 0.4588 & 3.2854	 & 1445 &	14 & 8 \\
3 &
0.4373 & 3.2655 & 0.4414 & 3.2653 & 1225 & 14 & 7 \\
4 &
0.4158 &	3.2689 &	0.4130 &	3.2663 &	1456	 &15	& 7 \\
5 &
0.4357 &	2.9680 &  0.4315 &	2.9651 &	1296	 & 14 & 6 \\
6 &
0.4943 &	3.1690 &	0.5053 &	3.1697 &	1547	 & 15 & 8 \\
7 &
0.5099 &	3.4486 &	0.5008 &	3.4439 &	1556 & 16	 & 8 \\
8 &
0.4937 & 3.3720  &  0.4841  &  3.3666 & 1532	& 16 & 8 \\
9 & 
0.4080 & 3.1254  &  0.4009  &  3.1301 & 1444 & 15	& 7 \\
10 &
0.4843 & 3.4498 &  0.4807  &  3.4454  & 1546 & 16  & 8 \\
\rowcolor{gray!80}
Mean &
0.4448 &	3.2261 &	0.4422 &	3.2242 & 1440.2 & 15.0 & 7.3
 \\
\hline
\end{tabular}
\end{center}
}
\caption{Monte Carlo Simulations Summary. The table shows a summary of the Monte Carlo Simulations with 10 experiments and 800 iterations each, for the \textbf{Differential Calculus} course.
}
\label{Tb Monte Carlo Experiments Summary}
\end{table}
\begin{algorithm}[H]
	\KwData{
		Database: \textit{Assembled\_Data.csv}
		Analyzed Course: DC, IC, ..., NM.\\
		Optimization Method: $ \mt \in \{\ia, \sa\} $. \\
		$ \upnt $ random variable distribution \\
		Numer of Iterations: $ NI $}
	\KwResult{Table of Relative Enhancement Values $ \rho_{\mt} $, $ \gamma_{\mt} $ for chosen method, course and academic performance variable.
	}
	\textbf{Initialization}\;
	\textbf{call} Algorithm \ref{Alg Random Setting MC Analytica Omega}\;
	$ \mathit{nt} \leftarrow $\textbf{compute} a realization of $ \upnt $\;
	\textbf{call} Algorithm \ref{Alg Instructors' performance}, \textbf{input}: (\textit{Assembled\_Data.csv},	\textit{Analyzed Course}, $ \apv $, Group Segmentation $ \big( I_{\ell}: \ell \in [L] \big) $)\;
	\For{ \text{iteration} $ \in [NI] $ }{
		 $ \mathit{list} \leftarrow $ \textbf{compute} a random list of $ nt $-lecturers\;
		\textbf{call} RandInputAlgorithm \ref{Alg Optimization Algoritm}, \textbf{input}:  
		(\underline{Group Assignment Matrix  \textit{G} , List of Lecturers \textit{L\_list} }, \textit{Analyzed Course}, $ \apv $, Group Segmentation $ \big( I_{\ell}: \ell \in [L] \big) $, $\mt$ )\;
		
		APV\_mt\_Assessment[\textit{iteration}]$ \leftarrow \big[ \rho_{\mt}, \gamma_{\mt} \big] $. 		
	}
	\caption{Monte Carlo Simulation Algorithm}
	\label{Alg Monte Carlo Analytica Omega}
\end{algorithm}
\begin{figure}[h!]
        \centering
        \begin{subfigure}[Example DC. Enhancement Results Pass Rate Monte Carlo Simulation. Instructor Assignment Method ($ \ia $). ]
{\includegraphics[scale = 0.38]{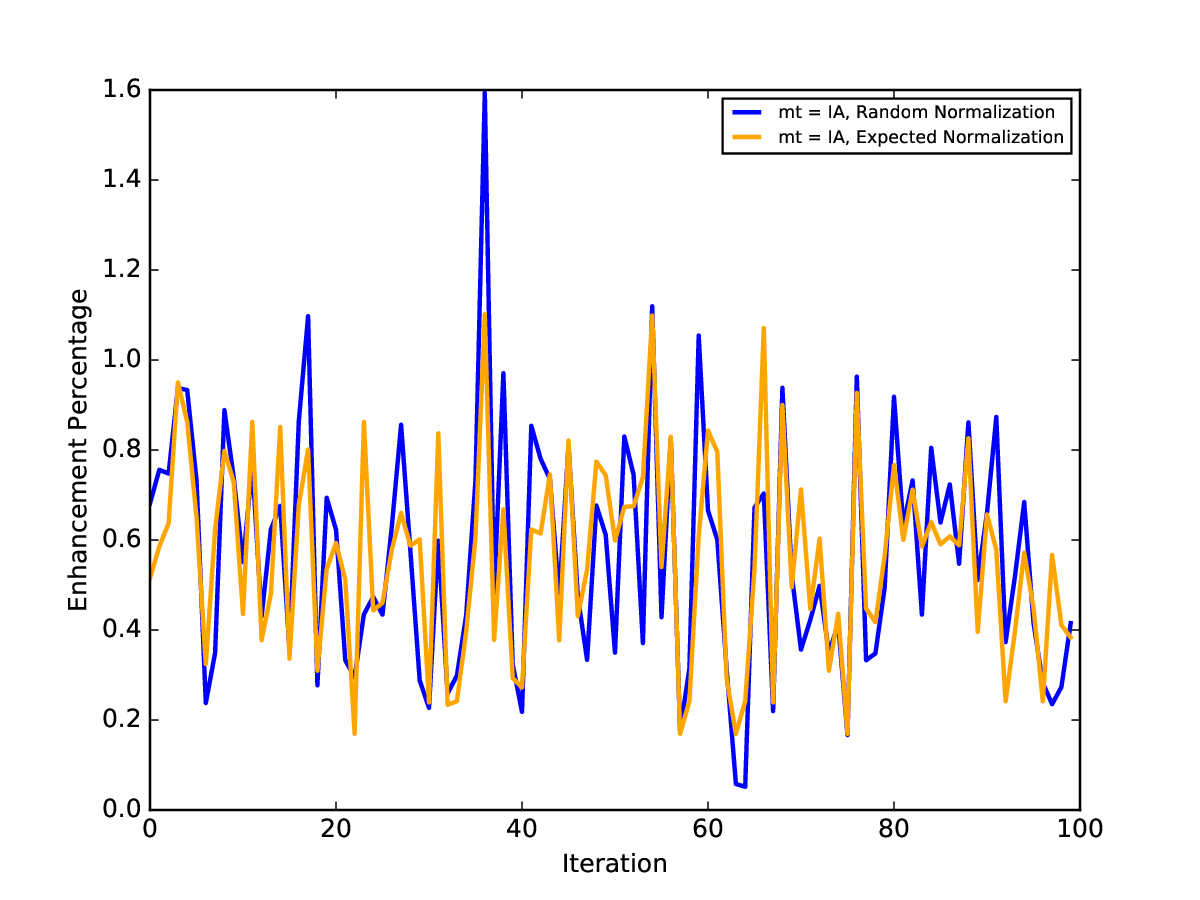} } 
        \end{subfigure}
        ~ 
          \begin{subfigure}[Example DC. Enhancement Results Pass Rate Monte Carlo Simulation. Student Assignment Method ($ \sa $). ]
{\includegraphics[scale = 0.38]{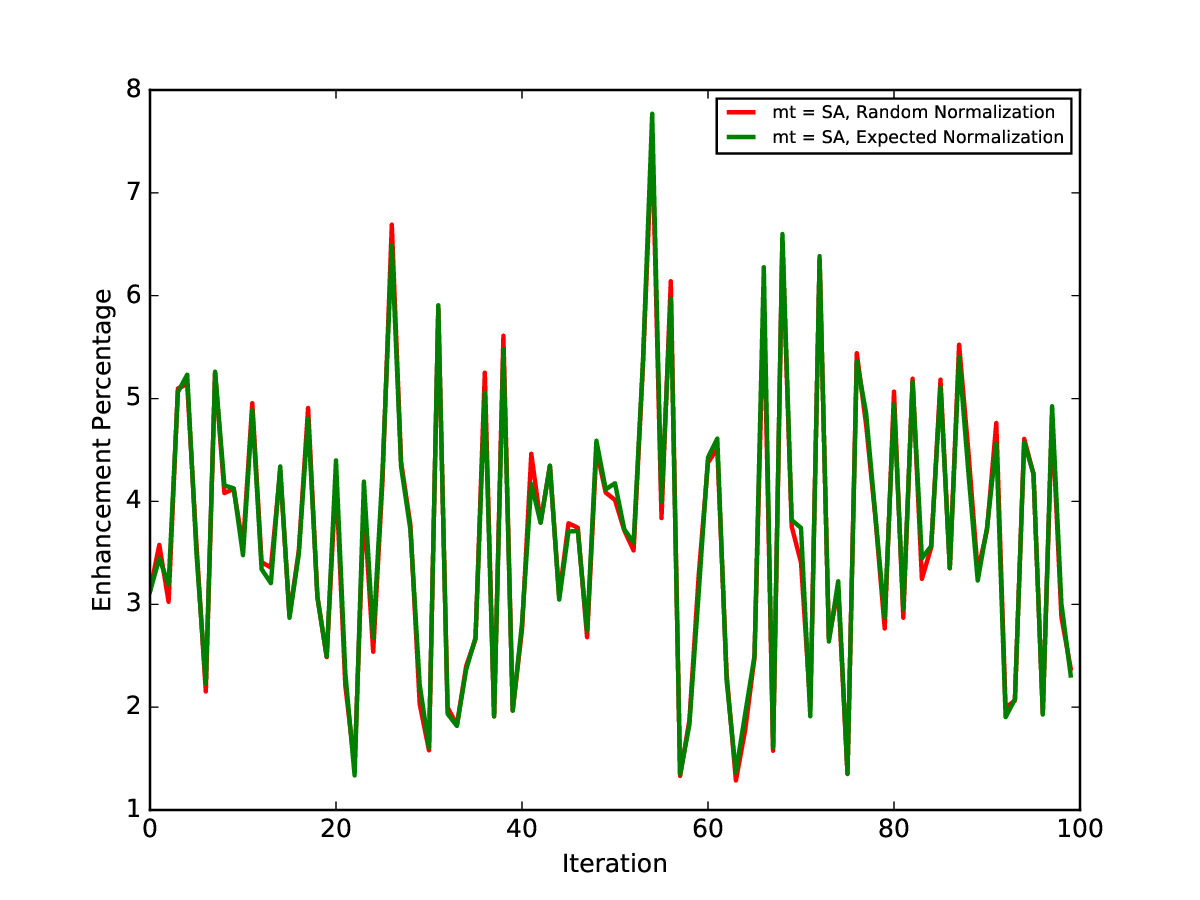} } 
        \end{subfigure}
        \begin{subfigure}[Example DC. Ces\`aro Means Enhancement Results Pass Rate Monte Carlo Simulation. Instructor Assignment Method ($ \ia $). ]
{\includegraphics[scale = 0.38]{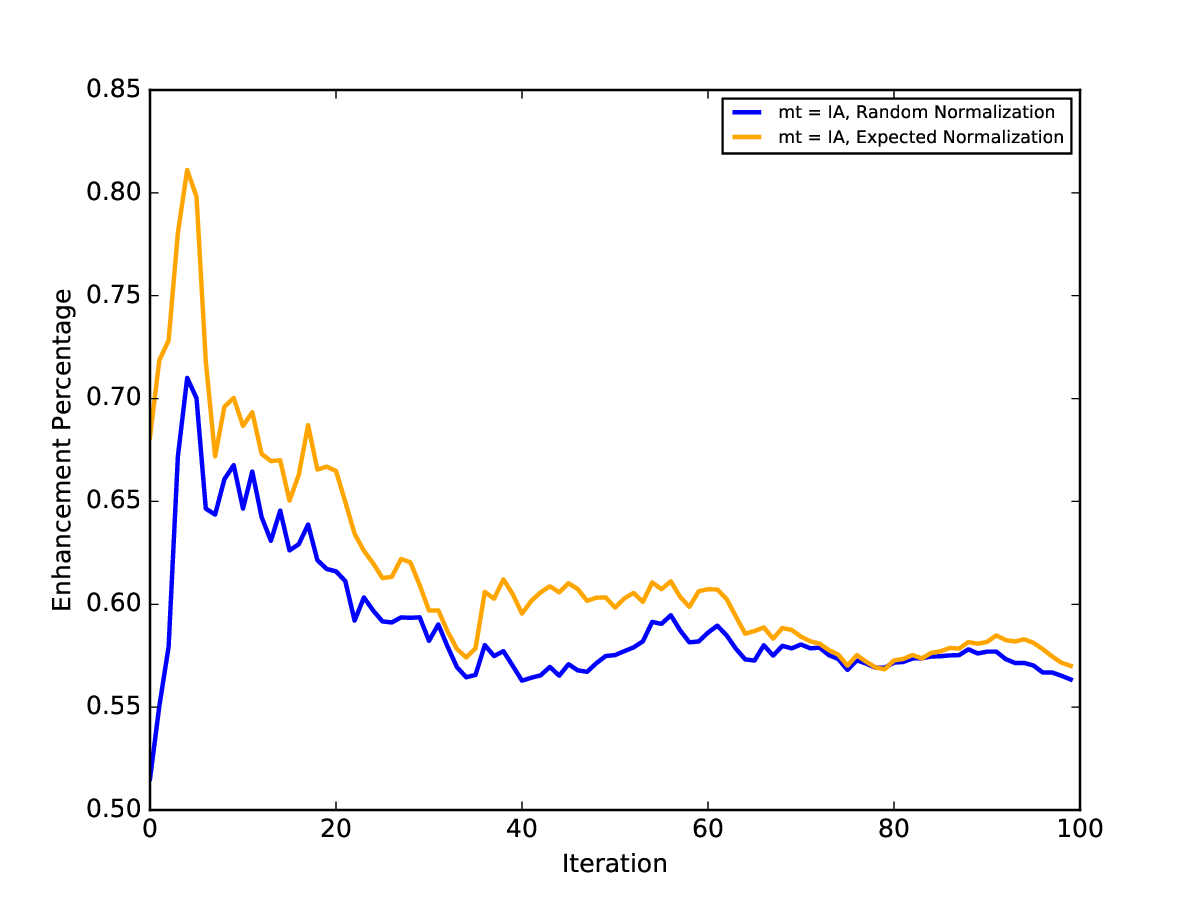} } 
        \end{subfigure}
        ~ 
          \begin{subfigure}[Example DC. Ces\`aro Means Enhancement Results Pass Rate Monte Carlo Simulation. Student Assignment Method ($ \sa $). ]
{\includegraphics[scale = 0.38]{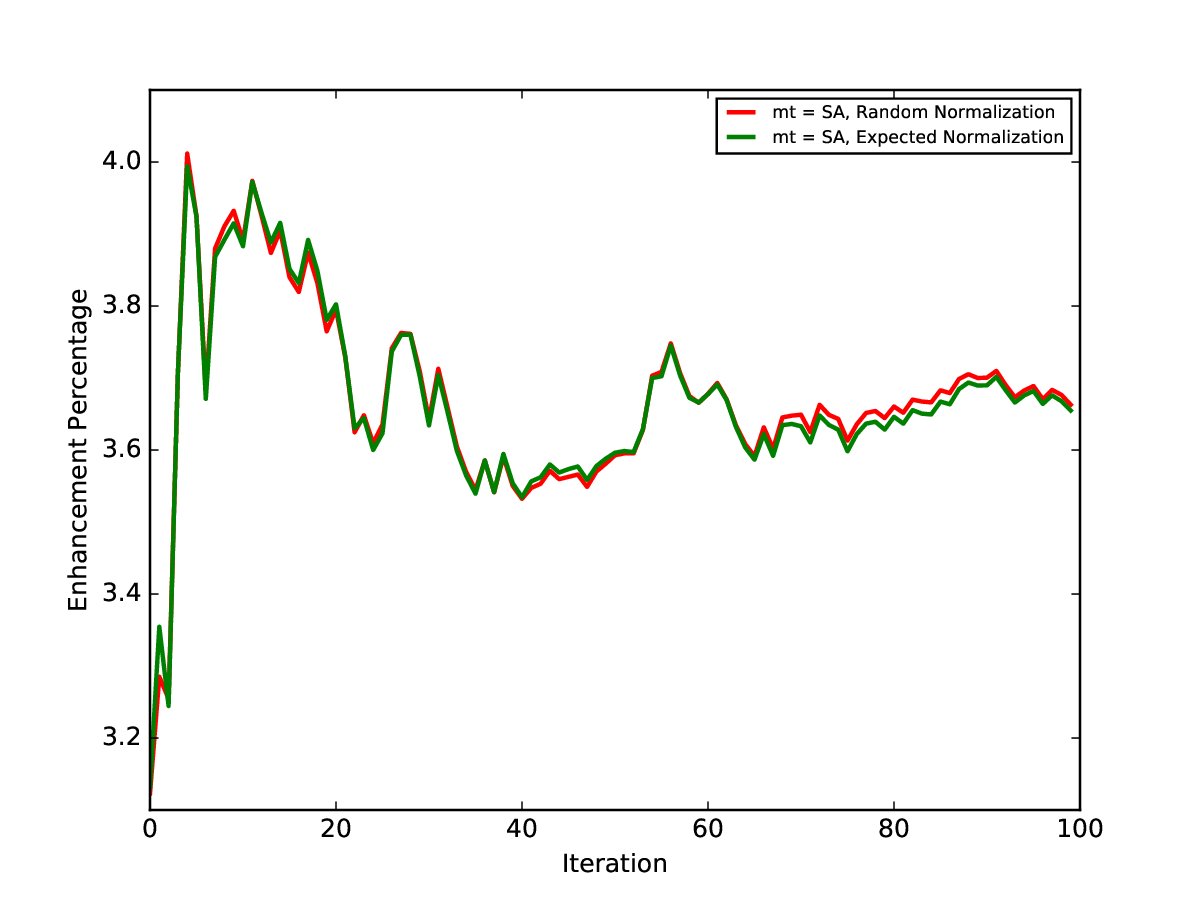} } 
        \end{subfigure}
\caption{Example: \textbf{Differential Calculus} course.
Enrollment of 1441 Students, 15 Sections, 8 Tenured Lecturers, 100 Iterations. All figures display the normalization $ \rho_{\mt} $ vs $ \gamma_{\mt} $ for $ \mt \in \{ \ia, \sa \} $. 
\label{Fig Asymptotic Enhancement} }
\end{figure}
\begin{table}[h!]
	\def\arraystretch{1.4}
	\scriptsize{
		\begin{center}
		\rowcolors{2}{gray!25}{white}
			\begin{tabular}{ c | c c c c c c c c c c | c }
				\hline
				\rowcolor{gray!80}
				\tiny{
				\diagbox 
				{SECTION}{SEGMENT} }
				&  \tiny{[0, 2.2]}
				&  \tiny{(2.2, 2.7]}
				&  \tiny{(2.7, 3.0]}
				&  \tiny{(3.0, 3.1]}
				&  \tiny{(3.1, 3.3]}
				&  \tiny{(3.3, 3.5]}
				&  \tiny{(3.5, 3.7]}
				&  \tiny{(3.7, 3.8]}
				&  \tiny{(3.8, 4.1]}
				&  \tiny{(4.1, 5.0]} 
				& Total \\
				\hline
				1&
				6 &	1 &	15 &	5 &	13 &	8 &	8 &	4 &	8 &	6 &	74 \\
				2 &
				10 & 6 & 14 &	3 &	5 &	7	& 11 &	5 & 8 &	5 & 74 \\
				3 &
				9 & 9	& 9	 & 8 &	6 &	3 &	9 &	1 &	15 &	5 &	74 \\
				4 &
				11 & 12	& 13 &	1 &	5 & 3 &	8 &	7 &	8 &	6 &	74 \\
				5 &
				5 &	9 &	12	& 7	& 11 &	6 &	9 &	2 &	5 &	9 & 75 \\
				6 &
				9 & 7 &	6 &	3 &	13 & 11 & 11 &	1 &	7 &	6 &	74 \\
				7 &
				12 & 5 & 7 & 2 & 7 & 10 & 12 &	6 &	21 & 7 & 89 \\
				8 &
				7 &	11 & 11 & 2 & 6 & 12 & 14 &	8 &	9 &	9 &	89 \\
				9 &
				11 & 18 & 14 &	7 & 6 &	12 & 10 & 7 & 10 &	9 &	104 \\
				10 &
				7 &	9 &	15 & 7 & 14 & 14 & 16 &	8 &	8 &	6 &	104 \\
				11 &
				14	& 8	& 20 &	4 &	15 & 17 & 14 &	0 & 13 & 14 & 119 \\
				12 &
				7 &	11 & 14 & 6	& 13 &	21 & 14 & 4	& 18 & 11 &	119 \\
				13 &
				15	& 15 &	15 & 9	& 15 &	16 & 11 & 3 & 11 &	9 &	119 \\
				14 &
				14	& 14 &	20 & 5	& 17 & 15 &	12 & 4 & 9 & 9 & 119 \\
				16 &
				15 & 13 & 25 &	6 &	15 & 16	& 10 &	9	& 17 &	8 & 134 \\
				\hline
				\rowcolor{gray!80}
				Total &
				152	& 148 & 210	& 75 & 161 & 171 & 169 & 69	& 167 &	119 & 1441
				\\
				\hline
			\end{tabular}
		\end{center}
	}
\caption{An example of a random realization of Algorithm \ref{Alg Random Setting MC Analytica Omega}, i.e. a group matrix assignment $ G $ and a number of tenured lecturers $ \upnt $ for the \textbf{Differential Calculus} course.
}\label{Tb Sections Realization Example}
\end{table}
\begin{table}[h!]
\def\arraystretch{1.4}
\small{
\begin{center}
\rowcolors{2}{gray!25}{white}
\begin{tabular}{p{1.5cm}|p{2.0cm}p{2.0cm}p{2.0cm}p{2.0cm}p{1.2cm}p{1.2cm}p{1.2cm}  }
    \hline
    \rowcolor{gray!80}
Course
&
\multicolumn{2}{c}{Random Normalization, $ \rho_{\mt} $ } &
\multicolumn{2}{c}{Expected Normalization, $ \gamma_{\mt} $ } 
& Enrollment 
& Sections
& Lecturers\\
\rowcolor{gray!80}

& $ 100 \times \dfrac{v_{\ia} }{ \rho_{\ia} } $
& $ 100 \times \dfrac{v_{\sa} }{ \rho_{\sa} } $
& $ 100 \times \dfrac{v_{\ia} }{ \gamma_{\ia} } $
& $ 100 \times \dfrac{v_{\sa} }{ \gamma_{\sa} } $
& $ \upns $ 
& $ \sum\limits_{K\,\in\, \mathcal{I}} s_{K} $
& $ \upnt $
 \\[5pt]
\hline
DC &
0.4448 &	3.2261 &	0.4422 &	3.2242 & 1440.2 & 15.0 & 7.3 \\
IC	& 0.3267	& 2.8094	& 0.3196	& 2.8104	& 1068.0	& 8.2	& 5.1 \\
VC	& 0.1684	& 1.9797	& 0.1800	& 1.9923	& 586.6	& 4.1	& 2.4 \\
VAG	& 0.4070	& 3.1366	& 0.4079	& 3.1474	& 1080.8	& 14.5	& 6.2 \\
LA	& 0.2131	& 3.2906	& 0.2009	& 3.2788	& 1078.2	& 8.0	& 4.2 \\
ODE	& 0.4323	& 5.6270	& 0.4269	& 5.6332	& 798.2	& 6.4	& 3.1 \\
BM	& 0.5706	& 3.0775	& 0.5909	& 3.0791	& 910.9	& 11.1	& 3.0 \\
NM	& 0.3750	& 3.3825	& 0.3405	& 3.3920	& 263.1	& 2.3	& 1.7 \\
\hline
\end{tabular}
\end{center}
}
\caption{A summary of the Monte Carlo Simulations with 800 iterations for each course.
}\label{Tb Monte Carlo Experiments All Courses Summary}
\end{table}
%
%
%
%
%
%
%
%
%
\section{Conclusions and Future Work}\label{Sec Conclusions}
%
%
%
%
The present work delivers several conclusions. 
\begin{enumerate}[I.]
\item From the modeling point of view
\begin{enumerate}[(i)]
\item A method has been implemented aimed to increase the academic performance for massive university lower division courses in mathematics. It is based on integer programming and big data analysis to compute the associated cost functions, while the constraints (such as the number of sections and corresponding capacities) are defined by administrative sources. The integer programs come from two mechanisms: assign instructors optimally ($ \ia$ method) or assign students optimally ($ \sa $ method).   

\item The academic performance was explored using two measures; Pass Rate and Grade. After correlation analysis of the data, it is determined that the one relevant factor, known at the time when the semester begins and incident on these statistical variables is the $ \gpa $. Consequently the profiling of students as well as the student body composition is defined in terms of the $ \gpa $ (see \textsc{Table} \ref{Tb Historical Enhnacements})

\item The historical assessment of the method yields poor enhancement levels for the Grade variable, due to its typical statistical robustness. However, the Pass Rate yields more satisfactory results; good enough to pursue a deeper analysis such as the method's randomization and its asymptotic assessment, presented in \textsc{Section} \ref{Sec Randomization and Asymptotic Assesment}.

\item The asymptotic analysis of the algorithm is done by randomizing the enrollment population and the administrative factors, statistically based on the empirical observations reported in the database \textit{Assembled\_Data.csv}. The Monte Carlo experiments establish that the method does not deliver a fixed value of relative enhancement, it depends on the starting parameters $ \big(\upns, G, \upnt\big) $ whose remarkable randomness inherit uncertainty to the algorithm's output values. 

\item Computing a weighted average across the courses by crossing the tables \ref{Tb Enrollement Random Variable} and \ref{Tb Monte Carlo Experiments All Courses Summary}, gives a rough estimate of 3.3 percent  full scale benefit, if the students assignment method ($ \sa $) is implemented. This is approximately 240 extra students per semester passing their respective courses which, in the long run represent a significant gain for the Institution.

\item The algorithms \ref{Alg Analytica Omega} and \ref{Alg Monte Carlo Analytica Omega}, could have been adjusted to keep only the sections with tenured lecturers. However, the Authors chose to discard this artificial setting because it is biased with respect to the study case.

\item It is the perception of the Authors that no general conclusions can be derived for the method's enhancement level. On one hand it is sufficiently general and flexible to be implemented at any Institution with massive courses and therefore big databases available. On the other hand, the experiments performed in the present work, suggest that its effectiveness needs to be evaluated on a case-wise basis.

\item Considering age as a factor is also possible by merely applying the segmentation process described in Section \ref{Sec Segmentation Process} (\textsc{Algorithm} \ref{Alg Segmentation of Students} with an adequate number of segmentation intervals $ \big( \widetilde{I}_{\ell}: \ell \in [\widetilde{L}]\big) $). First, computing the lecturers performance conditioned to the \textit{Age} variable as in \textsc{Section} \ref{Sec Computation Lecturer Performance} (\textsc{Algorithm} \ref{Alg Instructors' performance}, output $ T_{\textit{Age}} $). Second, weighting its impact according to the correlation values, namely the costs table in \textsc{Equation} \eqref{Eq Costs Table APV} can be modified as
\begin{equation}\label{Eq Costs Table Age and APV}
C = \frac{4}{5} \,  T_{\textit{APV} } G + \frac{1}{5}T_{\textit{Age}} \widetilde{G}.
\end{equation}
Here it is understood that the group matrix $ \widetilde{G} $ is constructed according to the \textit{Age} variable segmentation $ \big( \widetilde{I}_{\ell}: \ell \in [\widetilde{L}]\big) $. The weighting coefficients were proposed, according to the correlation with the \textit{Grade} variable reported in \textsc{Table} \ref{Tb Correlations DC}: \textit{Age}: 0.2, $ \gpa $: 0.8, i.e., the second is 4 times the first one (see  \cite{DeGiorgiPellizaariWoolstonGui} for further discussion on these type of models). Yet again, the flexibility of the method, allows to introduce in the same fashion any number of variables fitting to the case at hand.
\end{enumerate}
\item From the economy point of view
\begin{enumerate}[(i)]
\item A 3.3 \% enhancement for the method's benefit may seem low at first sight. However, it is important to stress that this enhancement corresponds to a detailed treatment of the tenured lecturers only, while the adjunct lecturers are treated in general terms because of insufficient data as they are unstable personnel. Tenured lecturers represent a fraction of less than 50 percent from the involved faculty team as \textsc{Table} \ref{Tb Monte Carlo Experiments All Courses Summary} shows. Consequently, should the stable personnel fraction increase, the method would deliver more accurate and perhaps more optimistic results. 

\item The method presented in this work offers a mechanism for higher education institutions to help their students improving their pass rates and grades. This is done by solving two different social welfare schedules ($ \ia $ and $ \sa $ methods). Under this approach, the University is considered as an agent that provides education, and as a \textbf{rational regulator agent}, capable to optimally allocate some of its resources for enhancement of social welfare of its students body.

\item The method has two important features that are particularly relevant in countries like Colombia where the drop out rates from college are high and the investment in higher education is low: $ \bullet $ By helping students to improve their pass rates and grades, it is alleviating the problem of high drop out rates; $ \bullet $ the method does not require major money investment from the University in order to be implemented. In theory, only the data and a capable person are required to implement it. 

\item The work also provides a way to measure (and monitor) how far from the Pareto equilibrium is an Institution at a given time. This important because it provides a way to determine whether the expected enhancement results are being achieved or not. 
\end{enumerate}
\item From the future work point of view
\begin{enumerate}[(i)]
\item This paper has worked two methods, assign instructors while keeping the students fixed ($ \ia $) and assign students while keeping the instructors fixed ($ \sa $), both of them come down to a linear optimization problem,  \ref{Pblm Instructors Assignment} and \ref{Pblm Students Assignment Problem} respectively. However, moving both instructors and students simultaneously is no longer a linear, but a bilinear optimization question (see \cite{Orlov, CapraraMonaci}). This view will be further explored in future work.  

\item So far, the present work assumed that allocating students and/or instructors is a decision centralized by the administrative departments of the analyzed Institution. However, in our study case, student location is decided differently, using a $ \gpa $ competition-based mechanism to assign priority starting from the highest to the lowest scorers. This competitive scenario is better modeled using game theory which will be explored in future work. 

\item The algorithm presented in this work offers a mechanism for higher education institutions to help their students improving their pass rates and grades. This is done by solving two different social welfare schedules ($ \ia $ and $ \sa $ methods). The work also provides a way to measure (and monitor) how far from the Pareto equilibrium is an Institution at a given time. As mentioned in the economic justification (Subsection \ref{economic_justification}), this is particularly relevant in countries like Colombia where the drop out rates from college are high.
\end{enumerate}
\end{enumerate}
%
%
%
%
%
%
%
\section{Acknowledgements}
%
%
%
%
The Authors wish to thank Universidad Nacional de Colombia, Sede Medell\'in for its support in the production of this work, in particular, to the Academic Director of the University for allowing access to their databases for this study. The first author was supported by grant Hermes 45713 from Universidad Nacional de Colombia, Sede Medell\'in. 
%
%
%
%
%
\bibliographystyle{plain}
%

%
%
%
%
%
%
%
\end{document}